\title{Minimizing Deduction System and Its Application\thanks{This work was supported by National Natural Science Foundation (Grant 61572491, 61972297) and National Key Research and Development Project(Grant 2018YFA0704705).}}
\author{Zhe CEN\inst{1}, Xiutao FENG\inst{2}\thanks{Corresponding author: fengxt@amss.ac.cn}, Zhangyi Wang\inst{3}\and Chunping CAO\inst{1}}
\institute{ Department of Computer Science and Technology, University of Shanghai for Science and Technology, Shanghai 200093, China %
\and Key Laboratory of Mathematics Mechanization, Academy of Mathematics and Systems Sciences, CAS, Beijing 100089, China %
\and School of Cyber Science and Engineering, Wuhan University, Wuhan 430072, China
}
\begin{document}

\maketitle

\begin{abstract}
In a deduction system with some  propositions and some known relations among these propositions, people usually care about the minimum of propositions by which all other propositions can be deduced according to these known relations. Here we call it a minimizing deduction system. Its common solution is the guess and determine method. In this paper we propose a method of solving the minimizing deduction system based on MILP. Firstly, we introduce the conceptions of state variable, path variable and state copy, which enable us to characterize all rules by inequalities. Then we reduce the deduction problem to a MILP problem and solve it by the Gurobi optimizer. As its applications, we analyze the security of two stream ciphers SNOW2.0 and Enocoro-128v2 in resistance to guess and determine attacks. For SNOW 2.0, it is surprising that it takes less than 0.1s to get the best solution of 9 known variables in a personal Macbook Air(Early 2015,  Double Intel Core i5 1.6GHZ, 4GB DDR3). For Enocoro-128v2, we get the best solution of 18 known variables within 3 minutes. What's more, we propose two improvements to reduce the number of variables and inequalities which significantly decrease the scale of the MILP problem.
\end{abstract}

\keywords minimizing deduction system, guess and determine method, MILP, SNOW 2.0, Enocoro-128v2

\setcounter{secnumdepth}{4}

\section{Introduction}

In scientific researches, we often deduce other propositions by some propositions and theirs relationships. Sometimes, these propositions can be derived from each other and the relationships among them are complicated. In this situation, people are more focused on the minimum of the number of propositions which can be used to deduce all other propositions. The solution of this problem can be applied to prove some theorems in Mathematics and other fields, especially in cryptography.

At present, a common solution of such a problem is the guess and determine method. Its basic idea is to assume some propositions are viewed as axioms and gradually deduce other propositions. If they can deduce all other propositions, it is a solution. Otherwise, another propositions will be viewed as axioms  and repeat it. In cryptography, the idea of guess and determine attack firstly appeared in \cite{DIVIDE} which proposed a divide and conquer attack recovering the unknown initial state from a known keystream sequence. Golic \cite{F} applied the guess and determine attack to the alleged A5/1 and broke it theoretically. Knudsen\cite{RC4} et al. utilized it to analyze the security of RC4.  In\cite{WORD}, Hawkes and Rose extended the guess and determine attack from bits to words and gave a guess and determine attack against word-oriented stream ciphers. In\cite{SNOW1.0}, Ekdahl et al. gave a guess and determine attack anainst SNOW. Canniere \cite{GD-SOBER} presented a guess and determine attack on SOBER \cite{SOBER}. In \cite{HEURISTIC}, Ahmadi et al. proposed a heuristic guess and determine attack on stream cipher by means of some new rules derived form original rules. In\cite{PRUNE}, based on local pruning and global pruning, Charles et al. proposed a guess and determine attack on the round-reduced AES. In\cite{LFSR}, Enes Pasalic proposed a guess and determine method for filter generator. In\cite{FILTER}, Wei et al. further improved the method proposed by Enes Pasalic. In\cite{SOSEMANUK, Rabbit}, Feng et al. splitted the original word units into smaller byte units and presented a byte-based guess and determine attack to SOSEMANUK and Rabbit. Combining the idea of the guess and determine method and the time-memory tradeoff method, they further presented realtime key or state recovering attacks against a series of ciphers including A2U2\cite{A2U2}, FASER128/FASER256\cite{FASER}, Sablier\cite{Sablier} and PANDA-s\cite{PANDA-S}. In\cite{SATGD}, Oleg Zaikin et al. adopted the idea of the guess and determine method to simplify the system of equations and solved it by the SAT optimizer. 

Mixed-integer linear programming (MILP, in short) is a method to solve a mathematical optimization problem in which some or all variables are integers in order to get the minimum or maximum of an objective function. It has been wildly used in business and economics. It was introduced to compute the number of active S-boxes in differential and linear cryptanalysis by Mouha et al.\cite{FIRST} and Wu et al. \cite{BCS} respectively. Since then, MILP began to appear in cryptanalysis frequently and became a powerful automatic search tool. In \cite{S-bP}, Sun et al. extended Mouha et al.’s method for block ciphers with S-bP structure by introducing new representations for exclusive-or (XOR) differences to describe bit/word level differences simultaneously and by taking the collaborative diffusion effect of S-boxes and bitwise permutations into account. In \cite{Characteristic, ASE}, they further presented a MILP-based  automatic method for finding high probability (related-key) differential or linear characteristics of block ciphers. In \cite{ARX}, Fu et al. extended the tool of MILP to ARX ciphers. In \cite{INPOSSIBLE}, Cui et al. proposed a new automatic search tool for impossible differentials and zero-correlation linear approximations. Revealing structural properties of several ciphers from design and cryptanalysis aspects, Yu Sasaki and Yosuke Todo gave a new impossible differential search tool in \cite{INPOSSIBLE2}. Recently, Shi et al. broke the full-round MORUS by means of the MILP tool \cite{MORUS}. 

In this work we recall the conception of a minimizing deduction system and propose a novel method of solving it based on MILP solver. Firstly, in order to characterize all rules by inequalities, we introduce the conceptions of state variable, path variable and state copy. Then we reduce the deduction problem to a MILP problem and solve it by the Gurobi optimizer. As its applications, we analyze the security of two stream ciphers SNOW 2.0\cite{SNOW2.0} and Enocoro-128v2\cite{Enocoro-128v2-1} in resistance to guess and determine attacks. To our surprise, it takes less than 0.01s to get the best solution that only 9 known variables can be used to deduce all other variables for SNOW 2.0 in a personal Macbook Air(11-inch, Early 2015,  Double Intel Core i5 1.6GHZ, 4GB DDR3). For Enocoro-128v2, we get the best solution within 3 minutes that only 18 konwn variables can be used to deduce all other variables. What's more, we propose two improvements to reduce the number of variables and inequalities which significantly decrease the scale of the MILP problem.

The rest of this paper is organized as follows: in section 2, we briefly recall some preliminaries about MILP and the minimizing deduction system; in section 3, some conceptions of state variable, path variable and state copy are introduced, which enable us to characterize rules by inequalities; in section 4, as its applications, we give guess and determine attacks against two stream ciphers SNOW $2.0$ and Enocoro-128v2; in section 5, we further propose two improvements to reduce the number of variables and inequalities.

\section{Preliminaries}

\subsection{Description of a minimizing deduction system}

A deduction system is usually represented as a 2-tuple $(\mathcal{P, R})$, where $\mathcal{P}$ is a set of some propositions and $\mathcal{R}$ is a set of some relations among these propositions. In the rest of this paper, we call these relations as rules. A rule tells us what propositions a proposition can be derived from. Sometimes, a proposition may be derived from serval rules in a deduction system. For example, let $(\mathcal{P, R})$ be a deduction system of 4 propositions and 5 rules. Without loss of generality, denote $\mathcal{P} =(p_1, p_2, p_3, p_4)$ and $\mathcal{R} = (r_1, r_2, r_3, r_4, r_5 )$, where $p_1,p_2,p_3,p_4$ are propositions, and $r_1,r_2,r_3,r_4,r_5$ are rules:
\[
	\begin{split}
		r_1 :& \quad p_2 \Rightarrow p_1, \\
		r_2 :& \quad p_3, p_4 \Rightarrow p_1, \\ 
		r_3 :& \quad p_1, p_3 \Rightarrow p_2, \\
		r_4 :& \quad p_1, p_4 \Rightarrow p_3, \\
		r_5 :& \quad p_1, p_2 \Rightarrow p_4. 
	\end{split}
\]
In the above deduction system, the proposition $p_1$ can be derived from two rules $r_1$ and $r_2$, that is, $p_1$ is derived by $p_2$ in $r_1$, and by $p_3$ and $p_4$ in $r_2$.
Most of the time, people concern how many propositions, especially the minimum of the number of propositions, can deduce all other propositions in a deduction system. In this paper we call it a minimizing deduction system.  As for the above deduction system, it is easy to check the minimum of propositions which can be used to deduce all other propositions is 1.
Indeed we can deduce all other propositions by the proposition $p_2$ as below:

\begin{figure}[ht]
\centering
\includegraphics[scale=0.2]{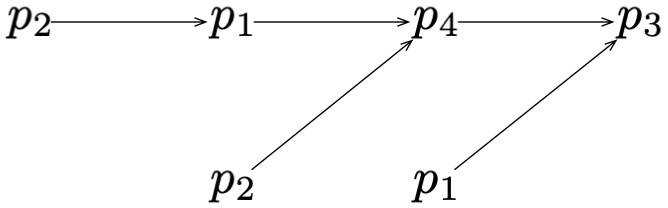}
\caption{A deduction course by the proposition $p_2$}
\label{fig:lable}
\end{figure}

\subsection{MILP}

The MILP problem is a mathematical optimization problem in which some or all variables are limited to integers. It contains three parts: an objective function, constraint conditions and decision variables. A MILP problem\cite{MILP} generally can be formalized as follows:

\begin{equation*}
\begin{split}
\textup{\textbf{Max.}}\textup{ or }\textup{\textbf{Min.}}\quad & \textbf{c}^T\textbf{x}\\
{\bf s.t.}\quad & \textbf{Ax}\le\textbf{b}\\
& \textbf{x} = \left[
 \begin{matrix}
   \textbf{p} \\
   \textbf{q}
  \end{matrix}
  \right], \quad \textbf{p}\in\mathbb{Z}^k,\quad \textbf{q}\in\mathbb{R}^{(n-k)},
\end{split}
\end{equation*}
where $\mathbb{Z}$ and $\mathbb{R}$ are denoted the set of all integers and reals respectively, $n$, $m$ and $k$ are three positive integers, $\mathbf{c}$ is a column vector in the $n$-dimensional vector space $\mathbb{R}^n$, \textbf{x} is a column vector in $\mathbb{R}^n$ in which $k$ variables are limited to integers, \textbf{A} is a full-rank metrix in $\mathbb{R}^{m \times n}$,  \textbf{b} is a column vector in $\mathbb{R}^m$. For a specific MILP problem, we can solve it by means of some mathematical softwares such as Gurobi\cite{Gurobi}, Cplex\cite{Cplex} and MiniSat\cite{MiniSat}.

\section{The characterization of inequality}

\subsection{State variable, path variable and state copy}

For a given deduction system $(\mathcal{P, R})$, where $\mathcal{P} =(p_1, p_2,\cdots, p_n)$ and $\mathcal{R} = (r_1, r_2, \cdots, r_m)$, here we will consider how to characterize it by inequalities. Firstly, we assume that the deduction system can be deduced by $k$ propositions, where $k$ is an integer such that $1\le k\le n$. For convenience, we describe a proposition $p_i(1 \le i \le n)$ as a {\it state variable} $x_i$. If a proposition $p_i$ belongs to these $k$ propositions or those deduced by these $k$ propositions, we call its corresponding state variable $x_i$  is known, denoted by 1. Otherwise, we call it unknown, denoted by 0. Sometimes a proposition $p_i$ maybe deduced by several rules $r_{i_1}, r_{i_2}, \cdots, r_{i_\tau}$, where $\tau$ is an integer such that $\tau\ge2$, $1\le i_j\le m$, $1\le j\le \tau$. In this case we call each rule $r_{i_j}$ as a path of $p_i$, denoted by a {\it path variable} $l_{i,j}$, $1\le j\le \tau$. If the proposition $p_i$ can be deduced by the rule $r_{i_j}$, we call the path variable $l_{i,j}$ is known, denoted by 1. Otherwise, we call it unknown, denoted by 0.

In order to depict the deduction system by the MILP method, we introduce a concept of state copy. Let $X=(x_1,x_2,\cdots,x_n)$ be a state of the deduction system $(\mathcal{P,R})$. We call $X'=(x'_1,x'_2,\cdots,x'_n)$ a  {\it state copy} of the state $X$ if $x'_i\ge x_i$ for all $1\le i\le n$. It is easy to see that if some state variable $x_i$ is known, then $x'_i$ is also known after the state copy. When $x_i$ is unknown, $x'_i$ maybe become known if the proposition $p_i$ can be deduced from the known propositions indicated in $X$. In the next two sections we will simulate the course of deduction by the state copy.

\subsection{The inequalities of the state variables}

For a state variable $x$, denote by $l_1,l_2,\cdots,l_\tau$ its corresponding path variables. Then $x$ is determined uniquely by  $l_1,l_2,\cdots,l_\tau$. It is easy to see that $x$ is known if and only if at least one of $l_1,l_2,\cdots,l_\tau$ is known. If all $l_i$'s are unknown, $x$ will be also unknown. The relationship of the state variable $x$ and its path variables  $l_1, l_2, \cdots, l_\tau$ is shown in Table \ref{table:1}.

\begin{table} 
\caption{The relationship of $x, l_1, l_2, \cdots, l_{\tau}$.} \label{table:1}
\centering
\setlength{\tabcolsep}{5mm}{%
\begin{tabular}{|c|c|c|c|}
\hline
No. & $x$ & $ l_1, l_2, \cdots, l_{\tau}$                            & \multicolumn{1}{l|}{Permission} \\ \hline
1  & 1 & Not all 0                      &  \checkmark                           \\ \hline
2  & 0 & All 0                          & \checkmark                            \\ \hline
3  & 0 & Not all 0 & $\times$                          \\ \hline
4  & 1 & All 0                          &  $\times$                       \\ \hline
\end{tabular}%
}
\label{Table:lable}
\end{table}

Below we consider to characterize the relationship in Table \ref{table:1} by inequalities of the form
\begin{equation} \label{eq:vpi}
	ax + \sum_{i = 1}^{\tau}b_il_i + c \ge 0.
\end{equation}
For simplification, here we assume that all $b_i(i = 1, 2, \cdots, \tau)$ are equal to $b$. According to the conditions 1, 2 and 3 in Table \ref{table:1}, we get the following inequality group:

\begin{equation} \label{eq:2}
  \left\{
   \begin{aligned}
   a + bl^{'} + c \ge 0  \\
   c \ge 0 \\
   bl^{''} + c < 0 \\
   \end{aligned}
   \right.
\end{equation}
for all $1\le l^{'}, l^{''}\le \tau$. It is easy to check that $a > 0, b < 0$ and $c \ge 0$. Thus the inequality group (\ref{eq:2}) is equivalent to the following inequality group in which $l^{'}$ and $l^{''}$ are taken $\tau$ and 1 respectively:

\begin{equation} \label{eq:3}
  \left\{
   \begin{aligned}
   a + b\tau + c \ge 0  \\
   c \ge 0 \\
   b + c < 0 \\
   \end{aligned}
   \right.
\end{equation}

Apparently, $a = \tau, b = -1, c = 0$ is a solution of the inequality group (\ref{eq:3}). Thus the conditions $1$, $2$ and $3$ can be characterized by the following inequality:

\begin{equation}\label{eq:4}
 \tau x -\sum_{i=1}^\tau l_i  \ge 0. \\
 \end{equation}
 
 Similarly, we have the inequality for the conditions $1, 2$ and $4$ in Table \ref{table:1}:
 \begin{equation}
 -2x + \sum_{i=1}^\tau l_i + 1 \ge 0. \\
 \end{equation}
 
So we get the following theorem which characterizes all conditions in Table \ref{table:1} completely.

\begin{theorem}\label{th:1}
Let $x$ be a state variable and $ l_1, l_2, \cdots, l_\tau \left(\tau\ge1\right)$ be its corresponding $\tau$ path variables. Then their relationship can be characterized by the following inequality group:
\begin{equation}
  \left\{
   \begin{aligned}
   -2x + \sum_{i=1}^\tau l_i + 1 \ge 0  \\
   \tau x - \sum_{i=1}^\tau l_i  \ge 0 \\
   \end{aligned}
   \right.
\end{equation}
\end{theorem}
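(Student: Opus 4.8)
The plan is to verify that the two inequalities in the theorem together permit exactly the three admissible rows (No.~1, 2, 4) of Table~\ref{table:1} and forbid the inadmissible one (No.~3), treating $x\in\{0,1\}$ and $l_1,\dots,l_\tau\in\{0,1\}$ as the only values of interest. Since the paper has already derived the inequality \eqref{eq:4}, namely $\tau x-\sum_{i=1}^\tau l_i\ge 0$, as the characterization of conditions~1, 2 and 3, and the companion inequality $-2x+\sum_{i=1}^\tau l_i+1\ge 0$ as the characterization of conditions~1, 2 and 4, the theorem follows by intersecting the two feasible sets; the real content is to check that this intersection is precisely $\{\text{No.}~1,2,4\}$, equivalently that the only excluded configuration is No.~3.

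First I would record what each inequality does on the Boolean cube. For $\tau x-\sum l_i\ge 0$: if $x=1$ it reads $\sum l_i\le\tau$, always true, so every row with $x=1$ survives (rows No.~1 and No.~4); if $x=0$ it reads $\sum l_i\le 0$, i.e.\ all $l_i=0$, so among the $x=0$ rows only No.~2 survives and No.~3 is killed. For $-2x+\sum l_i+1\ge 0$: if $x=0$ it reads $\sum l_i\ge -1$, always true, so both $x=0$ rows (No.~2 and No.~3) survive; if $x=1$ it reads $\sum l_i\ge 1$, i.e.\ not all $l_i$ are $0$, so among the $x=1$ rows only No.~1 survives and No.~4 is killed. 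This is the routine case split and I would present it compactly rather than as four separate displays.

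Next I would combine the two: a point $(x,l_1,\dots,l_\tau)$ satisfies the system iff it satisfies both inequalities. Row No.~1 ($x=1$, not all $l_i=0$) passes both; row No.~2 ($x=0$, all $l_i=0$) passes both; row No.~3 ($x=0$, not all $l_i=0$) is killed by the first inequality; row No.~4 ($x=1$, all $l_i=0$) is killed by the second. Hence the feasible Boolean points are exactly the rows marked \checkmark\ in Table~\ref{table:1}, which is the claimed characterization. I would also note the degenerate case $\tau=1$: then $x$ equals its single path variable $l_1$, the first inequality becomes $l_1+1\ge 2x$ and the second $x\ge l_1$, whose common Boolean solutions are $x=l_1$, consistent with the table.

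I do not anticipate a genuine obstacle here, since the statement is a finite Boolean verification once the two inequalities are in hand; the only thing to be careful about is making explicit the standing assumption that $x$ and the $l_i$ are $0/1$-valued (this is where the integrality in the MILP formulation is used), because over the reals the polytope cut out by the two inequalities is strictly larger than the convex hull of the four listed points. If one wanted the stronger statement that the system is the tightest possible linear description, one would have to check that these two inequalities are facet-defining for $\mathrm{conv}\{\text{No.}~1,2,4\}$; but the theorem as stated only asserts correctness of the characterization on Boolean points, so the case analysis above suffices.
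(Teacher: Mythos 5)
Your verification is correct and follows essentially the same route as the paper: the paper's proof of Theorem~\ref{th:1} simply appeals to the preceding derivation, where $\tau x-\sum_{i=1}^\tau l_i\ge 0$ is built to permit rows 1--2 and exclude row 3 of Table~\ref{table:1} and $-2x+\sum_{i=1}^\tau l_i+1\ge 0$ to permit rows 1--2 and exclude row 4, and your explicit Boolean case split just makes that check concrete (and rightly emphasizes that the claim is about $0/1$ points, not the real polytope). One correction to your framing, though: in your opening you call No.~1, 2, 4 ``the three admissible rows'' and say the intersection should be $\{$No.~1, 2, 4$\}$ with only No.~3 excluded; that describes the feasible set of the first inequality alone, and it contradicts both Table~\ref{table:1} (only rows 1 and 2 carry a \checkmark) and your own subsequent analysis, which correctly shows the system admits exactly rows 1 and 2, with row 3 killed by $\tau x-\sum l_i\ge 0$ and row 4 killed by $-2x+\sum l_i+1\ge 0$; the later, correct statement is the one to keep.
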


\begin{proof}
The conclusion directly follows by the above deducing procedure. \hfill $\blacksquare$
\end{proof}

\noindent {\bf Remark 1: } When $\tau = 1$, we can characterize the relationship between $x$ and $l_1$ by one simpler equality $x = l_1$.

\subsection{The inequalities of the path variables}

For a given rule $r$
\[
	x_1, x_2, \cdots, x_\kappa \Rightarrow x,
\]
we introduce a path variable $l$ for $x$. Then $l$ is determined uniquely by the state variables $x_1,x_2,\cdots,x_\kappa$. Indeed we have $l$ is known if and only if all $x_i$'s are known. If at least one of $x_i$'s is unknown, then $l$ is unknown. Table \ref{table:2} shows the relationship of the path variable $l$ and its state variables  $x_1, x_2, \cdots, x_\kappa$. As for their relationship, we have the following theorem.

\begin{table}
\caption{The relationship of $l, x_1, x_2, \cdots, x_\kappa$.} \label{table:2}
\centering
\setlength{\tabcolsep}{5mm}{%
\begin{tabular}{|c|c|c|c|}
\hline
No & $l $ & {$x_1, x_2, \cdots, x_\kappa$}                             & \multicolumn{1}{l|}{Permission} \\ \hline
1  & 1 & All 1                      & \checkmark                           \\ \hline
2  & 0 & Not all 1                       & \checkmark                           \\ \hline
3  & 0 & All 1            & $\times$                          \\ \hline
4  & 1 & Not all 1                          & $\times$                          \\ \hline
\end{tabular}%
}
\label{Table:lable}
\end{table}

\begin{theorem}\label{th:2}
Let $l$ be a path variable and $ x_1, x_2, \cdots, x_\kappa (\kappa\ge1)$ be its corresponding $\kappa$ state variables. Then their relationship can be characterized by the following inequality group:

\begin{equation}\label{eq:7}
  \left\{
   \begin{aligned}
   l - \sum_{i=1}^\kappa x_i + ( \kappa  - 1 )  \ge 0 \\
   -\kappa l + \sum_{i=1}^\kappa x_i \ge 0  \\
   \end{aligned}
   \right.
\end{equation}
\end{theorem}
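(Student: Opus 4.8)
The plan is to reproduce, step for step, the derivation that precedes Theorem~\ref{th:1}, exploiting the fact that Table~\ref{table:2} is the exact dual of Table~\ref{table:1} under interchanging the symbols $0$ and $1$: there ``$x$ is known iff some $l_i$ is known'', here ``$l$ is known iff every $x_i$ is known''. First I would search for each cutting plane in the linear form
\[
 al + \sum_{i=1}^{\kappa} b_i x_i + c \ge 0 ,
\]
and, just as in the text, observe that the rule $r$ treats $x_1,\dots,x_\kappa$ symmetrically, so it suffices to take $b_1 = \cdots = b_\kappa = b$. Then the only quantity that matters on the state-variable side is the integer $s := \sum_{i=1}^{\kappa} x_i \in \{0,1,\dots,\kappa\}$, which equals $\kappa$ in precisely the rows marked ``All $1$'' and is at most $\kappa-1$ in precisely the rows marked ``Not all $1$''.

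Next I would treat the two ``triples'' of rows separately, mirroring the split into rows $\{1,2,3\}$ and $\{1,2,4\}$ in Theorem~\ref{th:1}. For the triple $\{1,2,3\}$ the inequality must hold at row~1 ($l=1$, $s=\kappa$) and at every instance of row~2 ($l=0$, $s\le\kappa-1$) and must fail at row~3 ($l=0$, $s=\kappa$); since $bl + bs + c$ is monotone in $s$, the family of row~2 constraints collapses to its binding member $s=\kappa-1$, leaving the finite system $a+b\kappa+c\ge 0$, $b(\kappa-1)+c\ge 0$, $b\kappa+c<0$, whose sign analysis forces $a>0$, $b<0$, $c\ge 0$ and admits $(a,b,c)=(1,-1,\kappa-1)$; this gives the first inequality $l-\sum_{i=1}^{\kappa}x_i+(\kappa-1)\ge 0$. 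Symmetrically, for the triple $\{1,2,4\}$ the inequality must hold at rows~1 and~2 and fail at every instance of row~4 ($l=1$, $s\le\kappa-1$); the same monotonicity collapse yields $a+b\kappa+c\ge 0$, $c\ge 0$, $a+b(\kappa-1)+c<0$, which is solved by $(a,b,c)=(-\kappa,1,0)$, producing the second inequality $-\kappa l+\sum_{i=1}^{\kappa}x_i\ge 0$.

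Finally I would check that the conjunction of the two inequalities reproduces Table~\ref{table:2} exactly: rows~1 and~2 satisfy both by construction, row~3 is eliminated by the first inequality, and row~4 by the second, so the feasible $0/1$ assignments are precisely the permitted ones. I do not anticipate a genuine difficulty — the algebra is as short as for Theorem~\ref{th:1}, and the dual structure makes every step mechanical. The one point deserving a line of care is the monotonicity reduction: because $b$ comes out with opposite signs in the two triples, one must note which endpoint of $\{0,\dots,\kappa-1\}$ is binding (it is $s=\kappa-1$ for the row~2 constraint of the first triple and for the row~4 constraint of the second, but $s=0$ for the row~2 constraint of the second) so that the remaining values of $s$ are then subsumed for free.
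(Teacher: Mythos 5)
Your proposal is correct, but it reaches the result by a different route from the paper. The paper's proof of Theorem~\ref{th:2} is a pure verification: it sets $U = l - \sum_{i=1}^{\kappa}x_i + (\kappa-1)$ and $V = -\kappa l + \sum_{i=1}^{\kappa}x_i$ and checks the four rows of Table~\ref{table:2} directly, observing that $U,V \ge 0$ on the permitted rows 1 and 2, that $U = -1$ on row 3, and that $V \le -1$ on row 4. You instead reconstruct the inequalities by the coefficient search that the paper only carries out before Theorem~\ref{th:1}: the symmetric ansatz $al + b\sum_{i=1}^{\kappa} x_i + c \ge 0$, the reduction to $s=\sum_{i=1}^{\kappa} x_i$, the split into the row triples $\{1,2,3\}$ and $\{1,2,4\}$, and the monotonicity collapse of the infinite families of row-2 and row-4 instances to their binding endpoints, yielding $(a,b,c)=(1,-1,\kappa-1)$ and $(-\kappa,1,0)$; your closing check that the conjunction of the two inequalities admits exactly rows 1 and 2 and excludes rows 3 and 4 coincides with the paper's actual proof. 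Both arguments are sound: your derivation explains where the constants come from and correctly handles the one delicate point (the binding value of $s$ depends on the sign of $b$, which differs between the two triples), at the cost of extra length, whereas the paper's verification is shorter but unmotivated. Strictly speaking, only your final verification step is needed to prove the theorem as stated, since the statement only asserts that the displayed inequality group (\ref{eq:7}) characterizes Table~\ref{table:2}.
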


\begin{proof}
Let
 \begin{equation*}
   \begin{aligned}
   U &= l - \sum_{i=1}^\kappa x_i + (\kappa  - 1),   \\
   V &= -\kappa l + \sum_{i=1}^\kappa x_i.\\
   \end{aligned}
\end{equation*}

When $l = 1$ and all $x_1, x_2, \cdots, x_\kappa$ are $1$ or $l = 0$ and not all $x_1, x_2, \cdots, x_\kappa $ are 1, the value of $U$ and $V$ are always not less than $0$. That means the inequality group (\ref{eq:7}) meets the conditions $1$ and $2$ of Table \ref{table:2}. When $l = 0$ and all $x_1, x_2, \cdots, x_\kappa $ are $1$, the value of $U$ is $-1$. When  $l = 1$ and not all $x_1, x_2, \cdots, x_\kappa $ are $1$, the maximum of $V$ is $-1$. Therefore, all possible $l, x_1, x_2, \cdots, x_\kappa$ satisfying the conditions $3$ and $4$ in Table \ref{table:2} do not meet the inequality group (\ref{eq:7}). This completes the proof.  \hfill $\blacksquare$

\end{proof}

\noindent {\bf Remark 2: } When $\kappa = 1$, we can characterize the relationship between $l$ and $x_1$ by one simpler equality $l = x_1$.

\subsection{Initial condition and objective function}

For a given deduction system ($\mathcal{P, R}$) of $n$ propositions and $m$ rules, we assume that at most $k$ propositions are viewed as axioms. Let $X_0 = (x_1^{(0)}, x_2^{(0)}, \cdots, x_n^{(0)})$ be an initial state of the deduction system ($\mathcal{P, R}$). Then we have the initial condition:

\begin{equation}
x_1^{(0)} + x_2^{(0)} + \cdots + x_n^{(0)} \le k
\end{equation}

In order to simulate the deduction course, we conduct the state copy. Let $X_1, X_2, \cdots, X_\nu$ be the state sequence, where $X_i=(x_1^{(i)}, x_2^{(i)}, \cdots, x_n^{(i)})$, $1\le i\le \nu$, $\nu$ is a positive integer, and $X_i$ is the state copy of $X_{i-1}$. If a state variable $x$ in $X_{i-1}$ is known, then it is also known in $X_{i}$. If it is unknown in $X_{i-1}$ but it can be deduced by other known state variables in $X_{i-1}$ according to the rules $\mathcal{R}$, then it is known in $X_i$. After the state copy $\nu$ times, we expect the 1's in $X_\nu$ as many as possible, that is:
\begin{equation}
\textup{\textbf{Max.}}\quad x_1^{(\nu)} + x_2^{(\nu)} + \cdots + x_n^{(\nu)}.
\end{equation}
Obviously, $X_0 = (x_1^{(0)}, x_2^{(0)}, \cdots, x_n^{(0)})$ is a solution of the deduction system ($\mathcal{P, R}$) when $x_1^{(\nu)} + x_2^{(\nu)} + \cdots + x_n^{(\nu)}=n$.

\section{Applications to stream ciphers}

\subsection{SNOW 2.0} 
SNOW 2.0~\cite{SNOW2.0} is a strengthened version of SNOW\cite{SNOW1.0} which is a candidate of the NESSIE  and has been broken by the guess and determine attack\cite{SNOWGD} and the distinguishing attack\cite{SNOWDis}. At present SNOW 2.0 become one of the most important stream ciphers and is selected to be an international standard by ISO/IEC 18033-4\cite{Standard}. Below we recall SNOW 2.0 briefly. Please refer to~\cite{SNOW2.0} for more details.  

\subsubsection{Description of SNOW 2.0} $\\$

\noindent The stream cipher SNOW 2.0 consists of a Linear Feedback Shift Register $\left( LFSR \right) $ and a Finite State Machine $\left( FSM \right) $ as shown in Fig. \ref{Fig:snow2}. Here we denote by $\oplus$ and $\boxplus$ the bitwise addition and  the integer addition modulo $2^{32}$ respectively, and by $\alpha$ a root of the polynomial $x^4 + \beta^{23}x^3 + \beta^{245}x^2 + \beta^{48}x + \beta^{239} \in\mathbb{F}_{2^8}\left[x\right]$, where $\beta$ is a root of the polynomial $x^8 + x^7 + x^5 + x^3 + 1\in\mathbb{F}_2\left[x\right]$. Let $\alpha^{-1}$ be the inverse of $\alpha$, and $S$ be a permutation on $\mathbb{Z}_{2^{32}}$ based on the round function of Rijndael\cite{Rijndael}. We have the following iterative relations:

\begin{equation} \label{eq:snow1}
  \left\{
   \begin{aligned}
   &s_{t+16} = \alpha^{-1} \cdot s_{t+11} \oplus  s_{t+2} \oplus \alpha \cdot s_t \\
   &z_{t} = s_{t+15}\boxplus R1_t \oplus R2_t \oplus s_t \\
   &R1_{t+1} = s_{t+5} \boxplus R2_t\\
   &R2_{t+1} = S(R1_t)\\
   \end{aligned}
   \right.
\end{equation}
where $t \ge 0, t\in \mathbb{Z}$, and $z_{t}$ is the output key word.  We always assume that the sequence $\{z_t\}_{t\ge0}$ is known in the context of a guess and determine attack.

\begin{figure}[ht]
\centering
\includegraphics[scale=0.04]{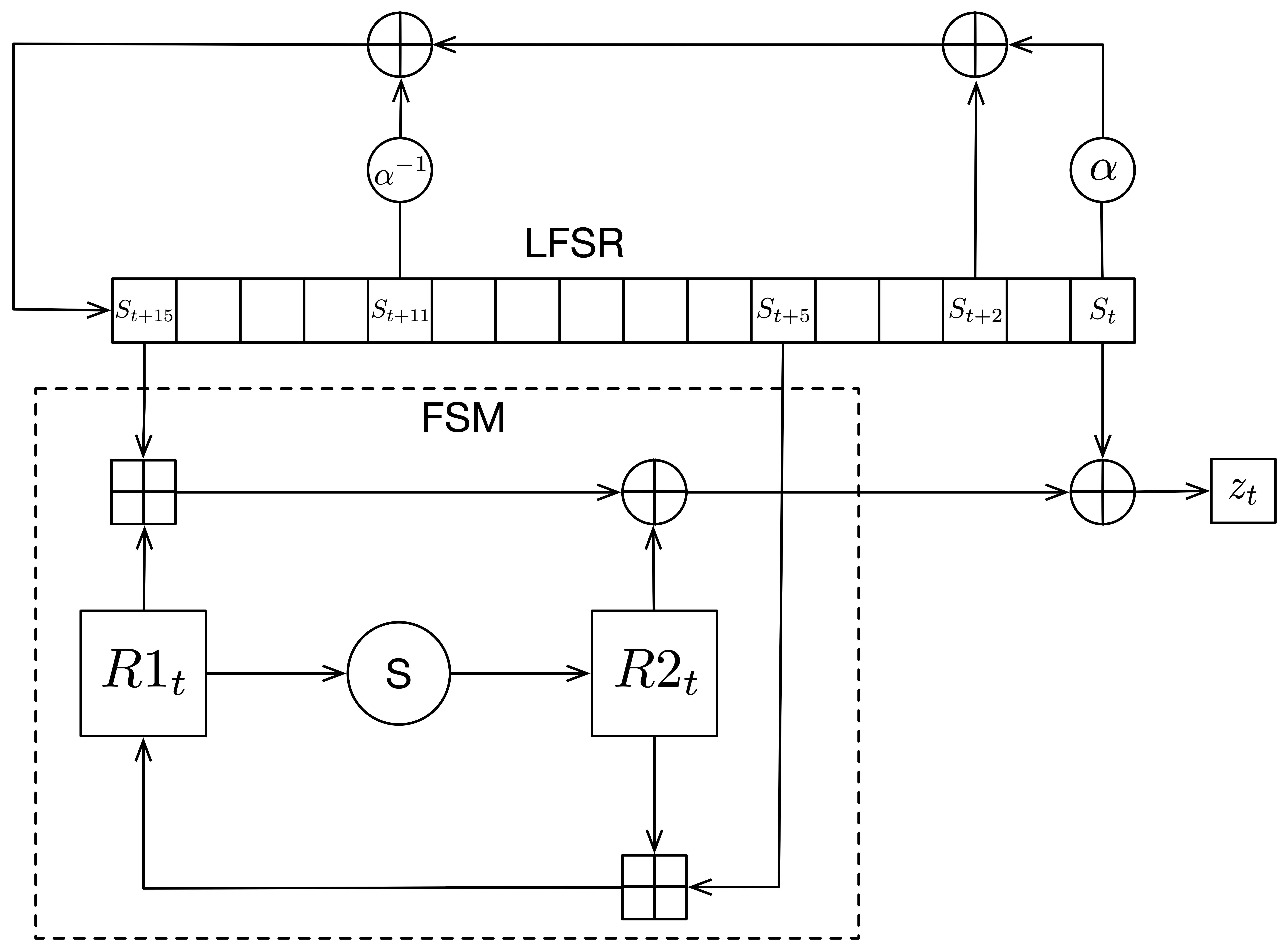}
\caption{A schematic picture of SNOW 2.0} \label{Fig:snow2}
\end{figure}

\subsubsection{Inequality characterization of SNOW 2.0}$\\$

\noindent In our minimizing deduction system, since we mainly focus on whether a variable is known or not rather than its certain value, we indicate by a state variable whether the value of some register unit is known or not. Without confusion, we still adopt the same symbol to indicate the state variable of some register unit, for example, the state variable $s_t$ is used to indicate the state of the LFSR unit $s_t$.  If the value of the LFSR unit $s_t$ is known, then set the state variable $s_t=1$, otherwise, set $s_t=0$. It is noticed that all variables in every formula of the iterative relations (\ref{eq:snow1}) are symmetrical, that is, each variable can be deduced by the other variables. Below we denote a rule by $\left[x_1, x_2, \cdots, x_\kappa\right]$, where $x_i$ is a state variable, $1 \le i \le \kappa$, which means each $x_i$ in $\left[x_1, x_2, \cdots, x_\kappa\right]$ can be deduced by the other $\left(\kappa-1\right)$ state variables and is called a {\it symmetrical rule}. Therefore we can get all rules by the iterative relations (\ref{eq:snow1}) as follows:

\begin{equation}\label{eq:snow2}
  \left\{
   \begin{aligned}
   &[s_{t+16}, s_{t+11}, s_{t+2}, s_t] \\
   &[s_{t+15}, R1_t, R2_t, s_t] \\
   &[R1_{t+1}, s_{t+5}, R2_t]\\
   &[R2_{t+1}, R1_t]\\
   \end{aligned}
   \right.
\end{equation}
where $t\ge0, t\in \mathbb{Z}$. Note that $[R2_{t+1}, R1_t]$ means one is known if and only if the other is also known, we always have $R2_{t+1}=R1_t$ for an arbitrary $t\ge0$. Therefore we can eliminate some state variables to further simplify the rules (\ref{eq:snow2}) and get the following rules:

\begin{equation}
  \left\{
   \begin{aligned}
   &[s_{t+16}, s_{t+11}, s_{t+2}, s_t] \\
   &[s_{t+15}, R2_{t+1}, R2_t, s_t] \\
   &[R2_{t+2}, s_{t+5}, R2_t]\\
   \end{aligned}
   \right.
\end{equation}

Rewriting R2 as R, we can get the following rules:

\begin{subequations}
\renewcommand{\theequation}
{\theparentequation-\arabic{equation}}
\begin{align}
&[s_{t+16}, s_{t+11}, s_{t+2}, s_t] \\
&[s_{t+15}, R_{t+1}, R_t, s_t] \\
&[R_{t+2}, s_{t+5}, R_t]
\end{align}
\end{subequations}

Assuming that totally $T$ key words $z_0, z_1, \cdots, z_{T-1}$ are known in a guess and determine attack, where $T\ge1$. Thus we get the state $X$ of all state variables in SNOW 2.0:
\[
	X = (s_0, s_1, s_2,\cdots,s_{14+T}, R_0, R_1, \cdots, R_{T}),
\]
which contains $(2T+16)$ state variables. We repeat the state copy $\nu$ times and get a state sequence $\{X_i\}_{0\le i\le \nu}$:
\[
	X^{(i)} = (s_0^{(i)}, s_1^{(i)}, s_2^{(i)},\cdots,s_{14+T}^{(i)}, R_0^{(i)}, R_1^{(i)}, \cdots, R_{T}^{(i)}).
\]

For any $0\le i\le\nu-1$, below we consider how many rules each state variable in $X_{i+1}$ can be deduced from $X_{i}$ by. Take the state variable $s_{11}^{(i+1)}$ for example, where we assume that $T\ge13$. By the state copy of $X_i$ and the rules (13-1), (13-2) and (13-3), we get all 6 rules deducing $s_{11}^{(i+1)}$:
\[
	\begin{split}
	 & \{s_{11}^{(i)}\}, \\
	 &\{R_{6}^{(i)}, R_{8}^{(i)}\},  \\
	 & \{s_{13}^{(i)}, s_{22}^{(i)}, s_{27}^{(i)}\},  \\
	 & \{R_{11}^{(i)}, R_{12}^{(i)}, s_{26}^{(i)}\},  \\
	 & \{s_{9}^{(i)}, s_{20}^{(i)}, s_{25}^{(i)}\},  \\
	 &\{s_{0}^{(i)}, s_{2}^{(i)}, s_{16}^{(i)}\}.
	\end{split}
\]

For the above each rule, we introduce a path variable $s_{11, j}^{(i+1)}$, $1 \le j \le 6$. By Theorems \ref{th:1} and \ref{th:2}, we get a characterization on the update of $s_{11}$.

\begin{equation}
  \left\{
   \begin{aligned}
   s_{11, 1}^{(i+1)} - s_{11}^{(i)} = 0 \\
   s_{11, 5}^{(i+1)} - R_{6}^{(i)} - R_{8}^{(i)} + 1 \ge 0  \\
   -2s_{11, 5}^{(i+1)} + R_{6}^{(i)} + R_{8}^{(i)}  \ge 0 \\
   s_{11, 2}^{(i+1)} -  s_{13}^{(i)} - s_{22}^{(i)} - s_{27}^{(i)} + 2 \ge 0  \\
   -3s_{11, 2}^{(i+1)} + s_{13}^{(i)} + s_{22}^{(i)} + s_{27}^{(i)}  \ge 0 \\
   s_{11, 3}^{(i+1)} - R_{11}^{(i)} - R_{12}^{(i)} - s_{26}^{(i)} + 2 \ge 0  \\
   -3s_{11, 3}^{(i+1)} + R_{11}^{(i)} + R_{12}^{(i)} + s_{26}^{(i)}  \ge 0 \\
   s_{11, 4}^{(i+1)} - s_{9}^{(i)} - s_{20}^{(i)} - s_{25}^{(i)} + 2 \ge 0  \\
   -3s_{11, 4}^{(i+1)} + s_{9}^{(i)} + s_{20}^{(i)} + s_{25}^{(i)}  \ge 0 \\
   s_{11, 6}^{(i+1)} - s_{0}^{(i)} - s_{2}^{(i)} - s_{16}^{(i)} + 2 \ge 0  \\
   -3s_{11, 6}^{(i+1)} + s_{0}^{(i)} + s_{2}^{(i)} + s_{16}^{(i)}  \ge 0 \\
   -2s_{11}^{(i+1)} + \sum_{j=1}^{6}s_{11, j}^{(i+1)} + 1\ge 0  \\
   6s_{11}^{(i+1)} - \sum_{j=1}^{6}s_{11, j}^{(i+1)} \ge 0
   \end{aligned}
   \right.
\end{equation}

For the other state variables in $X_{i+1}$, please refer to Appendix A. 

Here we assume that at most $k$ state variables in the initial state $X_0$ are known. Then we have the initial constraint condition
\begin{equation}
   \sum_{i=0}^{14+T}s_{i}^{(0)} + \sum_{i=0}^{T}R_{i}^{(0)} \le k, 
\end{equation}
and the objective function
\begin{equation}
\begin{split}
{\bf Max.}\quad & \sum_{i=0}^{14+T}s_{i}^{(\nu)} + \sum_{i=0}^{T}R_{i}^{(\nu)}. 
\end{split}
\end{equation}

\subsubsection{Experimental result}$\\$

\noindent In our experiment, we take $T = 13, \nu = 12$ and $k = 9$ and solve it by the Gurobi optimizer. It is surprising that it takes less than 0.1s to get the best solution of $9$ known variables: $R_4, R_5, R_6, R_7, R_8, R_9, R_{10}, R_{11}, R_{12}$. The deduction course of all other variables are given in Table \ref{la:snow_course}  according to the result of the Gurobi optimizer. 

\begin{table}[] 
\caption{The deduction course of state variables in SNOW 2.0}\label{la:snow_course}
\centering
\setlength{\tabcolsep}{1.5mm}{ 
\begin{tabular}{|c|c|l|c||c|c|l|c|} 
\hline
No. & \multicolumn{1}{c|}{Known}&Rule  & Deduced & No.  & \multicolumn{1}{c|}{Known} & Rule & Deduced\\ \hline\hline
1 & $R_{4},R_{6}$ & 13-3 & $s_{9}$  & 18 & $s_{10},s_{19},s_{24}$ & 13-1 & $s_{8}$  \\ \hline
2 & $R_{5},R_{7}$ & 13-3 & $s_{10}$  & 19 & $s_{8},R_{8},R_{9}$ & 13-2 & $s_{23}$  \\ \hline
3 & $R_{6},R_{8}$ & 13-3 & $s_{11}$  & 20 & $s_{8},R_{5}$ & 13-3 & $R_{3}$  \\ \hline
4 & $R_{7},R_{9}$ & 13-3 & $s_{12}$  & 21 & $s_{3},R_{3},R_{4}$ & 13-2 & $s_{18}$  \\ \hline
5 & $R_{8},R_{10}$ & 13-3 & $s_{13}$  & 22 & $s_{6},R_{3}$ & 13-3 & $R_{1}$  \\ \hline
6 & $R_{9},R_{11}$ & 13-3 & $s_{14}$  & 23 & $s_{4},s_{13},s_{18}$ & 13-1 & $s_{2}$  \\ \hline
7 & $R_{10},R_{12}$ & 13-3 & $s_{15}$  & 24 & $s_{9},s_{18},s_{23}$ & 13-1 & $s_{7}$  \\ \hline
8 & $s_{9},R_{9},R_{10}$ & 13-2 & $s_{24}$  & 25 & $s_{5},s_{7},s_{21}$ & 13-1 & $s_{16}$  \\ \hline
9 & $s_{10},R_{10},R_{11}$ & 13-2 & $s_{25}$  & 26 & $s_{7},R_{7},R_{8}$ & 13-2 & $s_{22}$  \\ \hline
10 & $s_{11},R_{11},R_{12}$ & 13-2 & $s_{26}$  & 27 & $s_{7},R_{4}$ & 13-3 & $R_{2}$  \\ \hline
11 & $s_{9},s_{11},s_{25}$ & 13-1 & $s_{20}$  & 28 & $s_{2},s_{11},s_{16}$ & 13-1 & $s_{0}$  \\ \hline
12 & $s_{10},s_{12},s_{26}$ & 13-1 & $s_{21}$  & 29 & $s_{6},s_{8},s_{22}$ & 13-1 & $s_{17}$  \\ \hline
13 & $s_{20},R_{5},R_{6}$ & 13-2 & $s_{5}$  & 30 & $s_{11},s_{13},s_{22}$ & 13-1 & $s_{27}$  \\ \hline
14 & $s_{21},R_{6},R_{7}$ & 13-2 & $s_{6}$  & 31 & $s_{16},R_{1},R_{2}$ & 13-2 & $s_{1}$  \\ \hline
15 & $s_{6},s_{15},s_{20}$ & 13-1 & $s_{4}$  & 32 & $s_{5},R_{2}$ & 13-3 & $R_{0}$  \\ \hline
16 & $s_{4},R_{4},R_{5}$ & 13-2 & $s_{19}$  & 33 & $s_{16},R_{11}$ & 13-3 & $R_{13}$  \\ \hline
17 & $s_{5},s_{14},s_{19}$ & 13-1 & $s_{3}$  &      &                      &         &  \\ \hline
\end{tabular}
}
\end{table}

\subsection{Enocoro-128v2}
Enocoro is a family of stream ciphers whose common specification firstly was published in\cite{Enocoro1.0}. We mainly focus on Enocoro-128v2\cite{Enocoro-128v2-1} in this paper. Sine the update function of Enocoro-128v2 is the same as Enocoro-128v1.1\cite{Enocoro-128v1.1}, our method have the same effect on both. Below we only give a short description of Enocoro-128v2. More details are in \cite{Enocoro-128v2-1, Enocoro-128v2-2}.

\subsubsection{Description of Enocoro-128v2}  $\\$

\noindent Enocoro-128v2 consists of four LFSR and a FSM as shown in Fig.\ref{lb:orginal_enocoro}. The substitution box $s_8$ defines a permutation which maps 8-bits inputs to 8-bits outputs. Here we don't care about the inner details of $s_8$.  The linear transformation $L$ is defined by a $2\times 2$ matrix over GF$\left(2^8\right)$:

\begin{equation}\label{eq:L}
    \begin{pmatrix}
        v_0 \\
        v_1 \\
    \end{pmatrix} = L(u_0, u_1) =  
    \begin{pmatrix}
        1 & 1  \\
        1 & 2\\
    \end{pmatrix}
    \begin{pmatrix}
        u_0 \\
        u_1 \\
    \end{pmatrix}.
 \end{equation}
According to the above expression, it is known that as long as arbitrary two values of $v_0, v_1, u_0, u_1$ are known, the other two values will be calculated. For the convenience of description, we redraw Fig. \ref{lb:orginal_enocoro} as Fig. \ref{lb:revised_enocoro}. As for Enocono-128v2, we have the following  iterative relations:

\begin{figure}[ht]
\centering
\includegraphics[scale=0.0235]{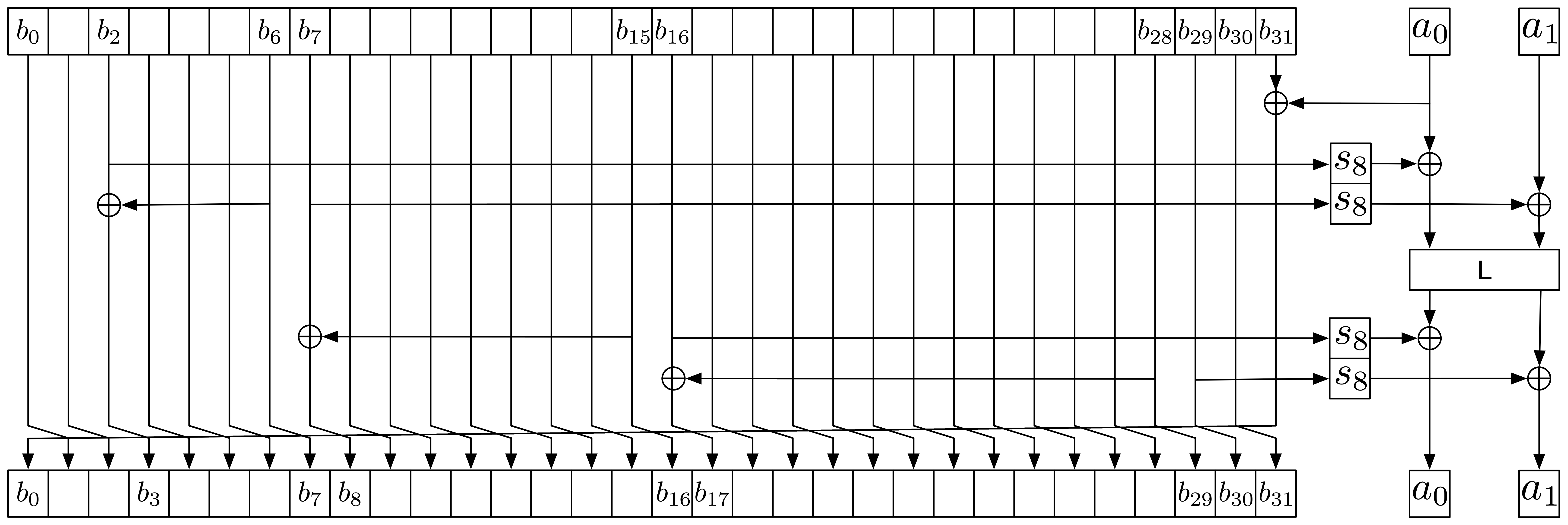}
\caption{The structure of Enocoro-128v2}\label{lb:orginal_enocoro}
\label{fig:lable}
\end{figure}

\begin{figure}[ht]
\centering
\includegraphics[scale=0.035]{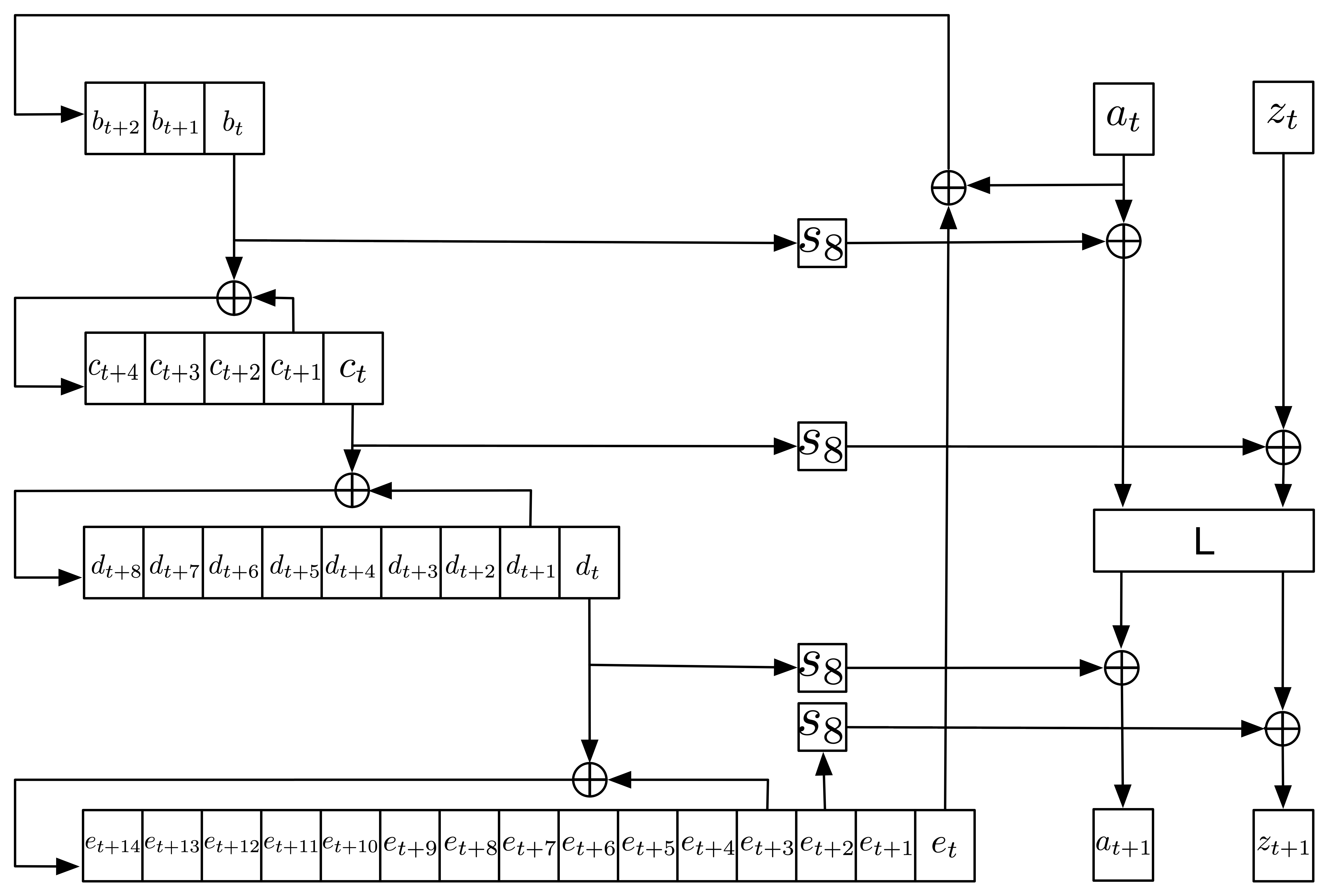}
\caption{The equivalent structure of Enocoro-128v2}\label{lb:revised_enocoro}
\label{fig:lable}
\end{figure}

\begin{equation}\label{eq:enocoro}
  \left\{
   \begin{aligned}
   &b_{t+3} = a_{t} \oplus  e_{t}\\
   &c_{t+5} = b_{t}\oplus c_{t+1} \\
   &d_{t+9} = c_{t} \oplus d_{t+1}\\
   &e_{t+15} = d_{t} \oplus e_{t+3}\\
   &\begin{pmatrix}
        a_{t+1} \oplus s_8(d_t) \\
        z_{t+1} \oplus s_8(e_{t+2}) \\
    \end{pmatrix} =  L
    \begin{pmatrix}
        a_t \oplus s_8(b_t) \\
        z_t \oplus s_8(c_t) \\
    \end{pmatrix}
   \end{aligned}
   \right.
\end{equation}
where $t \ge 0, t\in \mathbb{Z}$ and $z_{t}$ is the output key word.  We always assume that the sequence $\{z_t\}_{t\ge0}$ is known in the context of a guess and determine attack.

\subsubsection{Inequality characterization of Enocoro-128v2}$\\$

The course of translating iterative relations (\ref{eq:enocoro})  into rules  is similar to SNOW 2.0. The only different part is how to handle the operation of the linear transform $L$. Here we introduce two variables $f_t$ and $g_t$ where $f_{t} = a_{t} \oplus  s_8(b_t)$, $g_{t} = a_{t+1} \oplus  s_8(d_t)$. The iterative relations (\ref{eq:enocoro}) can be translated as the following:

\begin{equation}
  \left\{
   \begin{aligned}
   &b_{t+3} = a_{t} \oplus  e_{t}\\
   &c_{t+5} = b_{t}\oplus c_{t+1} \\
   &d_{t+9} = c_{t} \oplus d_{t+1}\\
   &e_{t+15} = d_{t} \oplus e_{t+3}\\
   &f_{t} = a_{t} \oplus  s_8(b_t)\\
   &g_{t} = a_{t+1} \oplus  s_8(d_t)\\
   &\begin{pmatrix}
        g_{t}  \\
        z_{t+1} \oplus s_8(e_{t+2}) \\
    \end{pmatrix} =  L
    \begin{pmatrix}
        f_{t} \\
        z_t \oplus s_8(c_t) \\
    \end{pmatrix}
   \end{aligned}
   \right.
\end{equation}

 Since $z_t$ and $z_{t+1}$ are known variables and $S_8$ is a known permutation, we omit them in our minimizing deduction system. For the linear transformation $L$, as long as we know two variables in this transformation, the other two variables can be deduced. Therefore we have the following rules:

\begin{subequations}
\renewcommand{\theequation}
{\theparentequation-\arabic{equation}}
\begin{align}
&[b_{t+3}, a_{t}, e_{t}] \\
&[c_{t+5}, b_{t}, c_{t+1}] \\
 &[d_{t+9}, c_{t}, d_{t+1}]\\
   &[e_{t+15}, d_{t}, e_{t+3}]\\
   &[f_{t}, a_{t}, b_{t}]\\
   &[g_{t}, a_{t+1}, d_{t}]\\
   &[g_{t}, f_{t}, e_{t+2}]\\
   &[g_{t}, f_{t}, c_{t}]\\
   &[g_{t}, e_{t+2}, c_{t}]\\
   &[f_{t}, e_{t+2}, c_{t}]
\end{align}
\end{subequations}

Assuming that totally $T$ key words $z_0, z_1, \cdots, z_{T-1}$ are known in a guess and determine attack, where $T\ge2$. Thus we get the state $X$ of all state variables in Enocoro-128v2:

\[
	\begin{split}
	X =& (a_0, \cdots,a_{T-1}, b_0,\cdots, b_{T-2}, c_0,\cdots, c_{T-2}, d_0, \cdots, d_{T-2}, \\
	 & \quad e_0, \cdots, e_{T}, f_0,\cdots, f_{T-2}, g_0, \cdots, g_{T-2}),
	 \end{split}
\]
which contains $(7T-4)$ state variables. We repeat the state copy $\nu$ times and get a state sequence $\{X_i\}_{0\le i\le \nu}$:
\[
	\begin{split}
	X^{(i)} =& (a_0^{(i)}, \cdots,a_{T-1}^{(i)}, b_0^{(i)}, \cdots, b_{T-2}^{(i)}, c_0^{(i)}, \cdots, c_{T-2},^{(i)} d_0^{(i)}, \cdots, d_{T-2}^{(i)}, \\
	 & \quad e_0^{(i)},\cdots, e_{T}^{(i)}, f_0^{(i)},\cdots, f_{T-2}^{(i)}, g_0, \cdots, g_{T-2}^{(i)}).
	 \end{split}
\]

For every state variables in $X$,  since their inequality characterization is similar to SNOW 2.0 by Theorems \ref{th:1} and \ref{th:2}, we don't repeat them. Please refer to Appendix B for more details.

Here we assume that at most $k$ state variables in the initial state $X_0$ are known. Then we have the initial constraint condition
\begin{equation}
   \sum_{i=0}^{T-1}a_{i}^{(0)} + \sum_{i=0}^{T-2}b_{i}^{(0)} + \sum_{i=0}^{T-2}c_{i}^{(0)} + \sum_{i=0}^{T-2}d_{i}^{(0)} + \sum_{i=0}^{T}e_{i}^{(0)} + \sum_{i=0}^{T-2}f_{i}^{(0)} + \sum_{i=0}^{T-2}g_{i}^{(0)} \le k,
\end{equation}
and the objective function
\begin{equation}\small
\begin{split}
{\bf Max.}\  \sum_{i=0}^{T-1}a_{i}^{(\nu)} + \sum_{i=0}^{T-2}b_{i}^{(\nu)} + \sum_{i=0}^{T-2}c_{i}^{(\nu)} + \sum_{i=0}^{T-2}d_{i}^{(\nu)} + \sum_{i=0}^{T}e_{i}^{(\nu)} + \sum_{i=0}^{T-2}f_{i}^{(\nu)} + \sum_{i=0}^{T-2}g_{i}^{(\nu)}.
\end{split}
\end{equation}

\subsubsection{Experimental result}$\\$

\noindent In our experiment, we take $T = 16, \nu = 18$ and $k = 18$ and solve it by the Gurobi optimizer. It takes about 3 minutes to get a current best solution of $18$ known variables: $a_{3}$, $a_{5}$, $b_{2}$, $b_{5}$, $b_{6}$, $c_{2}$, $c_{3}$, $c_{8}$, $c_{9}$, $c_{10}$, $e_{6}$, $e_{11}$, $e_{15}$, $f_{3}$, $f_{6}$,  $g_{1}$, $g_{2}$ $g_{5} $, whose objective function reaches 92. We find they indeed are a group of solutions by verification. 
The deduction course of all other variables are given in Table \ref{la:cnocoro_course}  according to the result of the Gurobi optimizer.

\begin{table}[]
\caption{The deduction courses of variables of Enocoro-128v2}\label{la:cnocoro_course}
\centering\small
\begin{tabular}{|c|c|c|c||c|c|c|c||c|c|c|c|}
\hline
NO. & Known & Rule & Ded.&NO. & Known & Rule & Ded.&NO. & Known & Rule & Ded.\\ \hline \hline
1 & $a_{3},b_{6}$ & 20-1 & $e_{3}$  & 34 & $c_{1},d_{2}$ & 20-3 & $d_{10}$  & 66 & $c_{10},g_{10}$ & 20-9 & $e_{12}$  \\ \hline
2 & $b_{2},c_{3}$ & 20-2 & $c_{7}$  & 35 & $c_{4},d_{5}$ & 20-3 & $d_{13}$  & 67 & $a_{10},f_{10}$ & 20-5 & $b_{10}$  \\ \hline
3 & $b_{5},c_{10}$ & 20-2 & $c_{6}$  & 36 & $b_{1},f_{1}$ & 20-5 & $a_{1}$  & 68 & $b_{10},e_{7}$ & 20-1 & $a_{7}$  \\ \hline
4 & $a_{3},f_{3}$ & 20-5 & $b_{3}$  & 37 & $b_{9},f_{9}$ & 20-5 & $a_{9}$  & 69 & $b_{10},c_{11}$ & 20-2 & $c_{15}$  \\ \hline
5 & $a_{5},b_{5}$ & 20-5 & $f_{5}$  & 38 & $a_{2},g_{1}$ & 20-6 & $d_{1}$  & 70 & $a_{7},b_{7}$ & 20-5 & $f_{7}$  \\ \hline
6 & $b_{6},f_{6}$ & 20-5 & $a_{6}$  & 39 & $c_{4},e_{6}$ & 20-9 & $g_{4}$  & 71 & $a_{7},g_{6}$ & 20-6 & $d_{6}$  \\ \hline
7 & $a_{3},g_{2}$ & 20-6 & $d_{2}$  & 40 & $c_{4},e_{6}$ & 20-10 & $f_{4}$  & 72 & $c_{15},e_{17}$ & 20-9 & $g_{15}$  \\ \hline
8 & $c_{2},g_{2}$ & 20-8 & $f_{2}$  & 41 & $a_{1},b_{4}$ & 20-1 & $e_{1}$  & 73 & $c_{15},e_{17}$ & 20-10 & $f_{15}$  \\ \hline
9 & $c_{3},f_{3}$ & 20-8 & $g_{3}$  & 42 & $d_{1},e_{4}$ & 20-4 & $e_{16}$  & 74 & $c_{5},d_{6}$ & 20-3 & $d_{14}$  \\ \hline
10 & $c_{2},g_{2}$ & 20-9 & $e_{4}$  & 43 & $b_{4},f_{4}$ & 20-5 & $a_{4}$  & 75 & $c_{7},f_{7}$ & 20-8 & $g_{7}$  \\ \hline
11 & $c_{9},e_{11}$ & 20-9 & $g_{9}$  & 44 & $a_{1},d_{0}$ & 20-6 & $g_{0}$  & 76 & $c_{7},f_{7}$ & 20-10 & $e_{9}$  \\ \hline
12 & $c_{3},f_{3}$ & 20-10 & $e_{5}$  & 45 & $a_{5},g_{4}$ & 20-6 & $d_{4}$  & 77 & $a_{9},e_{9}$ & 20-1 & $b_{12}$  \\ \hline
13 & $c_{9},e_{11}$ & 20-10 & $f_{9}$  & 46 & $c_{13},e_{15}$ & 20-9 & $g_{13}$  & 78 & $d_{14},g_{14}$ & 20-6 & $a_{15}$  \\ \hline
14 & $a_{5},e_{5}$ & 20-1 & $b_{8}$  & 47 & $c_{13},e_{15}$ & 20-10 & $f_{13}$  & 79 & $a_{15},f_{15}$ & 20-5 & $b_{15}$  \\ \hline
15 & $a_{6},e_{6}$ & 20-1 & $b_{9}$  & 48 & $a_{4},e_{4}$ & 20-1 & $b_{7}$  & 80 & $b_{15},e_{12}$ & 20-1 & $a_{12}$  \\ \hline
16 & $c_{2},c_{6}$ & 20-2 & $b_{1}$  & 49 & $c_{3},d_{4}$ & 20-3 & $d_{12}$  & 81 & $a_{12},b_{12}$ & 20-5 & $f_{12}$  \\ \hline
17 & $b_{3},c_{8}$ & 20-2 & $c_{4}$  & 50 & $a_{4},g_{3}$ & 20-6 & $d_{3}$  & 82 & $a_{12},d_{11}$ & 20-6 & $g_{11}$  \\ \hline
18 & $b_{6},c_{7}$ & 20-2 & $c_{11}$  & 51 & $d_{13},g_{13}$ & 20-6 & $a_{14}$  & 83 & $c_{11},g_{11}$ & 20-8 & $f_{11}$  \\ \hline
19 & $e_{3},e_{15}$ & 20-4 & $d_{0}$  & 52 & $e_{2},g_{0}$ & 20-7 & $f_{0}$  & 84 & $c_{12},f_{12}$ & 20-8 & $g_{12}$  \\ \hline
20 & $d_{2},e_{5}$ & 20-4 & $e_{17}$  & 53 & $e_{2},g_{0}$ & 20-9 & $c_{0}$  & 85 & $c_{11},g_{11}$ & 20-9 & $e_{13}$  \\ \hline
21 & $b_{2},f_{2}$ & 20-5 & $a_{2}$  & 54 & $c_{14},e_{16}$ & 20-9 & $g_{14}$  & 86 & $c_{12},f_{12}$ & 20-10 & $e_{14}$  \\ \hline
22 & $a_{6},g_{5}$ & 20-6 & $d_{5}$  & 55 & $c_{14},e_{16}$ & 20-10 & $f_{14}$  & 87 & $a_{11},f_{11}$ & 20-5 & $b_{11}$  \\ \hline
23 & $e_{3},g_{1}$ & 20-7 & $f_{1}$  & 56 & $b_{7},c_{8}$ & 20-2 & $c_{12}$  & 88 & $d_{12},g_{12}$ & 20-6 & $a_{13}$  \\ \hline
24 & $f_{5},g_{5}$ & 20-7 & $e_{7}$  & 57 & $c_{0},d_{1}$ & 20-3 & $d_{9}$  & 89 & $b_{11},e_{8}$ & 20-1 & $a_{8}$  \\ \hline
25 & $f_{5},g_{5}$ & 20-8 & $c_{5}$  & 58 & $c_{2},d_{3}$ & 20-3 & $d_{11}$  & 90 & $a_{13},f_{13}$ & 20-5 & $b_{13}$  \\ \hline
26 & $c_{6},f_{6}$ & 20-8 & $g_{6}$  & 59 & $b_{0},f_{0}$ & 20-5 & $a_{0}$  & 91 & $a_{10},b_{13}$ & 20-1 & $e_{10}$  \\ \hline
27 & $e_{3},g_{1}$ & 20-9 & $c_{1}$  & 60 & $a_{14},f_{14}$ & 20-5 & $b_{14}$  & 92 & $a_{8},b_{8}$ & 20-5 & $f_{8}$  \\ \hline
28 & $c_{6},f_{6}$ & 20-10 & $e_{8}$  & 61 & $a_{0},b_{3}$ & 20-1 & $e_{0}$  & 93 & $a_{8},g_{7}$ & 20-6 & $d_{7}$  \\ \hline
29 & $a_{2},b_{5}$ & 20-1 & $e_{2}$  & 62 & $b_{14},e_{11}$ & 20-1 & $a_{11}$  & 94 & $c_{6},d_{7}$ & 20-3 & $d_{15}$  \\ \hline
30 & $c_{1},c_{5}$ & 20-2 & $b_{0}$  & 63 & $d_{9},g_{9}$ & 20-6 & $a_{10}$  & 95 & $e_{10},f_{8}$ & 20-7 & $g_{8}$  \\ \hline
31 & $c_{5},c_{9}$ & 20-2 & $b_{4}$  & 64 & $a_{11},d_{10}$ & 20-6 & $g_{10}$  & 96 & $a_{9},g_{8}$ & 20-6 & $d_{8}$  \\ \hline
32 & $b_{8},c_{9}$ & 20-2 & $c_{13}$  & 65 & $c_{10},g_{10}$ & 20-8 & $f_{10}$  & 97 & $d_{15},g_{15}$ & 20-6 & $a_{16}$  \\ \hline
33 & $b_{9},c_{10}$ & 20-2 & $c_{14}$  & & & & & & &&\\ \hline
\end{tabular}
\end{table}

Our result is better than that in \cite{Enocoro-128v2_evaluation} which has to guess 20-22 variables in total and is the same as in \cite{GUESS18}. However, we don't have access to this paper since it is in Japanese.

\section{Two improvements}

\subsection{The reduction of path variables and inequalities}

In this section we will improve the minimizing deduction system by reducing the number of path variables and inequalities. In the previous description,
a state variable is determined by its corresponding path variables which consist of tow parts: one from the state copy and the others from some rules.
We find that for the update of each state variable $x$, at least 2 path variables which come from the state copy and some rule respectively can be reduced. More concretely, let $x$ be a state variable and $l_1, l_2, \cdots, l_{\tau}$ be its corresponding path variables, where $l_1$ is from the state copy. We suppose that $l_\tau$ is determined by the state variables $x_1, x_2, \cdots, x_\kappa$. Denote by $x'$ the variable copy of $x$. Then the relation between $x'$ and $x, l_2, \cdots, l_{\tau-1}, x_1, x_2, \cdots, x_\kappa$ is shown in Table \ref{tb:x'}. We have the following conclusion:

\begin{table}[]
\caption{The relationship of $x', x, l_2, \cdots, l_{\tau-1}, x_1, x_2, \cdots, x_{\kappa}$.}\label{tb:x'}
\centering
\setlength{\tabcolsep}{5mm}{%
\begin{tabular}{|c|c|c|c|c|c|}
\hline
No. & $x'$ & {$ x, l_2, \cdots, l_{\tau-1}$}        & {$x_1, x_2, \cdots, x_{\kappa}$}         & permission \\ \hline
1  & 1 & Not all 0 &  N/A         & \checkmark         \\ \hline
2  & 1 & N/A          & All 1     & \checkmark         \\ \hline
3  & 0 & All 0     & Not all 1 & \checkmark         \\ \hline
4  & 1 & All 0     & Not all 1 & $\times$     \\ \hline
5  & 0 & Not all 0 &  N/A         & $\times$     \\ \hline
6  & 0 & N/A          & All 1     & $\times$     \\ \hline
\end{tabular}%
}
\label{Table:lable}
\end{table}

\begin{theorem}\label{th:3}
Let $x'$ be a copy of the state variable $x$, $l_2, \cdots, l_{\tau}$ be path variables corresponding to $x$. Denote by $x_1, x_2, \cdots, x_{\kappa}$
the state variables of $l_\tau$. Then they are characterized by the following inequality group:
\begin{equation}\label{eq:improve1}
  \left\{
   \begin{aligned}
   ((\tau - 1)\kappa + 1)x' - \kappa(x + \sum_{i=2}^{\tau-1}l_i) -  \sum_{i=1}^{\kappa}x_i + \kappa - 1 \ge 0 \\
   -\kappa x' + \kappa (x + \sum_{i=2}^{\tau-1}l_i) + \sum_{i=1}^{\kappa}x_i  \ge 0 \\
   \end{aligned}
   \right.
\end{equation}
\end{theorem}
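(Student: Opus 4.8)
The plan is to verify that the two inequalities in (\ref{eq:improve1}) accept exactly the rows marked ``\checkmark'' in Table \ref{tb:x'} and reject exactly the rows marked ``$\times$''. As in the proofs of Theorems \ref{th:1} and \ref{th:2}, I would name the two left-hand sides, say
\begin{equation*}
  \begin{aligned}
   U &= ((\tau-1)\kappa+1)x' - \kappa\Bigl(x+\sum_{i=2}^{\tau-1}l_i\Bigr) - \sum_{i=1}^{\kappa}x_i + \kappa - 1,\\
   V &= -\kappa x' + \kappa\Bigl(x+\sum_{i=2}^{\tau-1}l_i\Bigr) + \sum_{i=1}^{\kappa}x_i,
  \end{aligned}
\end{equation*}
and write $\sigma = x+\sum_{i=2}^{\tau-1}l_i \in \{0,1,\dots,\tau-1\}$ for the number of ``ones'' among the copy/rule path variables and $\pi = \sum_{i=1}^{\kappa}x_i \in \{0,1,\dots,\kappa\}$ for the number of ``ones'' among the state variables feeding $l_\tau$. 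The key observation — which should be stated explicitly first — is that $x'$ is forced to be $1$ precisely when $\sigma\ge 1$ or $\pi=\kappa$, and forced to be $0$ otherwise; this is exactly the case split of Table \ref{tb:x'}, since $x'$ is the copy of $x$ after applying rule $r_\tau$.

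Next I would check the four ``\checkmark'' cases give $U\ge 0$ and $V\ge 0$. For rows 1 and 2 ($x'=1$): in $V$ the term $-\kappa x' = -\kappa$ is cancelled either by $\kappa\sigma\ge\kappa$ (row 1) or by $\pi=\kappa$ (row 2), so $V\ge 0$; in $U$ the coefficient $((\tau-1)\kappa+1)$ of $x'$ was chosen large enough to dominate the most negative possible value of $-\kappa\sigma-\pi+\kappa-1$ (attained at $\sigma=\tau-1,\pi=\kappa$, namely $-(\tau-1)\kappa-1$), so $U\ge 0$. For row 3 ($x'=0$, $\sigma=0$, $\pi\le\kappa-1$): $V=\pi\ge 0$ and $U=-\pi+\kappa-1\ge 0$ since $\pi\le\kappa-1$. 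Then I would check the three ``$\times$'' cases violate at least one inequality: row 4 ($x'=1$, $\sigma=0$, $\pi\le\kappa-1$) makes $V=-\kappa+\pi\le -1<0$; row 5 ($x'=0$, $\sigma\ge 1$) makes $U = -\kappa\sigma-\pi+\kappa-1 \le -\kappa-\pi+\kappa-1 = -\pi-1 \le -1 <0$; row 6 ($x'=0$, $\pi=\kappa$) makes $U = -\kappa\sigma-\kappa+\kappa-1 = -\kappa\sigma-1\le -1<0$. Collecting these cases completes the argument.

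The only mildly delicate point — the ``hard part'' such as it is — is confirming that the coefficient $((\tau-1)\kappa+1)$ on $x'$ in the first inequality is both large enough (so that every permitted row with $x'=1$ still satisfies $U\ge 0$, including the extreme row $\sigma=\tau-1$, $\pi=\kappa$) and not so large that it would wrongly rescue a forbidden $x'=1$ row; but since the only forbidden $x'=1$ row is row 4, which is killed by $V$ rather than $U$, no upper bound on this coefficient is needed. I would also add the boundary remark that when $\tau=1$ (no intermediate $l_i$'s and $x$ itself absent from $\sigma$) or $\kappa=1$ the group degenerates to the simpler equalities already noted in Remarks 1 and 2, so that the statement is only used for $\tau\ge 2$, $\kappa\ge 1$ in practice. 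This finishes the proof.
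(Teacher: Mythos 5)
Your proposal is correct and follows essentially the same route as the paper's own proof: both define the two left-hand sides $U$ and $V$ and verify, case by case against Table \ref{tb:x'}, that the permitted rows give $U\ge 0$ and $V\ge 0$ while each forbidden row forces $U$ or $V$ down to at most $-1$. Your counting shorthand $\sigma$ and $\pi$ merely makes the paper's same verification more explicit, so no further comparison is needed.
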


\begin{proof}

Similarly to the proof of theorem $\ref{th:2}$, let
 \begin{equation*}
   \begin{aligned}
   U  &= ((\tau - 1)\kappa + 1)x' - \kappa(x + \sum_{i=2}^{\tau-1}l_i) -  \sum_{i=1}^{\kappa}x_i + \kappa - 1, \\
    V &= -\kappa x' + \kappa (x + \sum_{i=2}^{\tau-1}l_i) + \sum_{i=1}^{\kappa}x_i.
   \end{aligned}
\end{equation*}

It is easy to check that the value of $U$ and $V$ are always no less than 0 when $x' = 1$ and not all $x, l_2, \cdots, l_{\tau-1}$ are $0$ or $x' = x_1 = x_2= \cdots = x_{\kappa} = 1$ or $x' = x = l_2 = \cdots = l_{\tau-1} = 0$ and not all $x_1, x_2, \cdots, x_{\kappa}$ are 1. That means the inequality group (\ref{eq:improve1}) meets the conditions $1$, $2$ and $3$ of Table \ref{tb:x'}. When $x' = 1$, $x = l_2 =\cdots = l_{\tau-1} = 0$ and not all  $x_1, x_2, \cdots, x_{\kappa}$ are 1,  the maximum of $V$ is $-1$. When $x' = 0$ and not all $x, l_2, \cdots, l_{\tau-1}$ are $0$, the maximum of $U$ is $-1$. When $x' = 0$ and $x_1 = x_2 = \cdots = x_{\kappa} = 1$, the maximum of $U$ is $-1$. Therefore, all possible $x', x, l_2, \cdots, l_{\tau-1}, x_1, \cdots, x_\kappa$ satisfying the conditions 4, 5 and 6 in Table \ref{tb:x'} don't meet the inequality group (\ref{eq:improve1}). This completes the proof.  \hfill $\blacksquare$

\end{proof}

By Theorem \ref{th:3}, it is known that for the update of each state variable, two path variables and three inequalities are reduced from the inequality group.  As for SNOW 2.0 and Enocoro-128v2, the former is reduced totally $(4T + 32)\nu$ path variables and $(6T + 48)\nu$ inequalities, and the latter totally $(14T - 8)\nu$ path variables and $(21T - 12)\nu$ inequalities.

\subsection{The elimination of state variables and rules}

\subsubsection{Rules with two variables}$\\$

In this section we only consider the rules including two variables. Let $r$ be a rule with $x_1$ and $ x_2$. If $x_1$ and $x_2$ can be derived from each other, we always have $x_1 = x_2$. Therefore we can eliminate one of $x_1$ and $x_2$ and the rule $r$.

In fact, this improvement has been used in the inequality characterization of SNOW 2.0 in section 4.1, where totally 11 variables and 11 rules are reduced.

\subsubsection{Independent state variables and their rules}$\\$

In a deduction system, we call a variable $x$ to be independent if it is only used in a rule $r$. Here we assume that the rule $r$ contains exactly one independent variable and discuss it in two cases:
\begin{itemize}
	\item the independent variable $x$ belongs to a symmetrical rule $r$:
		\begin{equation}
			[x, x_{1}, x_{2}, \cdots, x_{\tau-1}].
		\end{equation}
		In this case, we can eliminate $x$ and r. That is because: 1) If $x$ is unknown and can be deduced by $r$, since $x$ only appears in $r$, it must be deduced by $r$ and the other variables in $r$ should be known initially or deduced by other rules. Obviously, when $x$ and $r$ are removed from the deduction system, it doesn't affect the other variables in $r$. 2) If $x$ is known in the initial state $X_0$, when $r$ is used, there exists a variable in $r$ expect $x$ deduced by $r$. We denote it by $x_1$. Take an initial state $X'_0$, where $x=0$, $x_1=1$ and the values of the others variables in $X'_0$ are the same as those in $X_0$. Obviously, the deduction course of $X_0$ and $X'_0$ is the same expect $r$. Note that $|X_0|$ = $|X'_0|$, where $|X|$ means the number of 1 in $X$,  if $X_0$ is a solution of the deduction system, $X'_0$ is also its solution, vice versa. Thus it does not change the minimum of the deduction system when $x$ and $r$ are removed.
		 
	\item the independent variable $x$ belongs to the following rule: 
		\begin{equation}
			x_{1}, x_{2}, \cdots, x_{\tau-1} \Rightarrow x_{\tau},
		\end{equation}
which means the right variables $x_{j}$ can be deduced by the left variables $x_{1}, x_{2}, \cdots, x_{\tau-1}$. If $x$ is one of the left variables, it must be known in the initial state $X_0$. If $x = x_{\tau}$, since $x$ only can be deduced by $r$, it does not affect the other deduction courses except $r$ when $x$ and $r$ are removed from the deduction system.
	
\end{itemize}

\newpage

 \appendix
 \renewcommand{\appendixname}{Appendix~\Alph{section}}

 \section{State variables and theirs paths of SNOW 2.0 with $T = 13$}
  \begin{table}[]\tiny\center
\begin{tabular}{|c|l|c|}
\hline
\multicolumn{1}{|c|}{State Var.} & \multicolumn{1}{c|}{Paths} & \multicolumn{1}{c|}{Num.}\\ \hline
$s_{0}^{(i+1)}$&$\{s_{0}^{(i)}\}$,$\{s_{16}^{(i)},s_{11}^{(i)},s_{2}^{(i)}\}$,$\{s_{15}^{(i)},R_{1}^{(i)},R_{0}^{(i)}\}$& 3\\ \hline
$s_{1}^{(i+1)}$&$\{s_{1}^{(i)}\}$,$\{s_{17}^{(i)},s_{12}^{(i)},s_{3}^{(i)}\}$,$\{s_{16}^{(i)},R_{2}^{(i)},R_{1}^{(i)}\}$& 3\\ \hline
$s_{2}^{(i+1)}$&$\{s_{2}^{(i)}\}$,$\{s_{16}^{(i)},s_{11}^{(i)},s_{0}^{(i)}\}$,$\{s_{18}^{(i)},s_{13}^{(i)},s_{4}^{(i)}\}$,$\{s_{17}^{(i)},R_{3}^{(i)},R_{2}^{(i)}\}$& 4\\ \hline
$s_{3}^{(i+1)}$&$\{s_{3}^{(i)}\}$,$\{s_{17}^{(i)},s_{12}^{(i)},s_{1}^{(i)}\}$,$\{s_{19}^{(i)},s_{14}^{(i)},s_{5}^{(i)}\}$,$\{s_{18}^{(i)},R_{4}^{(i)},R_{3}^{(i)}\}$& 4\\ \hline
$s_{4}^{(i+1)}$&$\{s_{4}^{(i)}\}$,$\{s_{18}^{(i)},s_{13}^{(i)},s_{2}^{(i)}\}$,$\{s_{20}^{(i)},s_{15}^{(i)},s_{6}^{(i)}\}$,$\{s_{19}^{(i)},R_{5}^{(i)},R_{4}^{(i)}\}$& 4\\ \hline
$s_{5}^{(i+1)}$&$\{s_{5}^{(i)}\}$,$\{s_{19}^{(i)},s_{14}^{(i)},s_{3}^{(i)}\}$,$\{s_{21}^{(i)},s_{16}^{(i)},s_{7}^{(i)}\}$,$\{s_{20}^{(i)},R_{6}^{(i)},R_{5}^{(i)}\}$,$\{R_{0}^{(i)},R_{2}^{(i)}\}$& 5\\ \hline
$s_{6}^{(i+1)}$&$\{s_{6}^{(i)}\}$,$\{s_{20}^{(i)},s_{15}^{(i)},s_{4}^{(i)}\}$,$\{s_{22}^{(i)},s_{17}^{(i)},s_{8}^{(i)}\}$,$\{s_{21}^{(i)},R_{7}^{(i)},R_{6}^{(i)}\}$,$\{R_{1}^{(i)},R_{3}^{(i)}\}$& 5\\ \hline
$s_{7}^{(i+1)}$&$\{s_{7}^{(i)}\}$,$\{s_{21}^{(i)},s_{16}^{(i)},s_{5}^{(i)}\}$,$\{s_{23}^{(i)},s_{18}^{(i)},s_{9}^{(i)}\}$,$\{s_{22}^{(i)},R_{8}^{(i)},R_{7}^{(i)}\}$,$\{R_{2}^{(i)},R_{4}^{(i)}\}$& 5\\ \hline
$s_{8}^{(i+1)}$&$\{s_{8}^{(i)}\}$,$\{s_{22}^{(i)},s_{17}^{(i)},s_{6}^{(i)}\}$,$\{s_{24}^{(i)},s_{19}^{(i)},s_{10}^{(i)}\}$,$\{s_{23}^{(i)},R_{9}^{(i)},R_{8}^{(i)}\}$,$\{R_{3}^{(i)},R_{5}^{(i)}\}$& 5\\ \hline
$s_{9}^{(i+1)}$&$\{s_{9}^{(i)}\}$,$\{s_{23}^{(i)},s_{18}^{(i)},s_{7}^{(i)}\}$,$\{s_{25}^{(i)},s_{20}^{(i)},s_{11}^{(i)}\}$,$\{s_{24}^{(i)},R_{10}^{(i)},R_{9}^{(i)}\}$,$\{R_{4}^{(i)},R_{6}^{(i)}\}$& 5\\ \hline
$s_{10}^{(i+1)}$&$\{s_{10}^{(i)}\}$,$\{s_{24}^{(i)},s_{19}^{(i)},s_{8}^{(i)}\}$,$\{s_{26}^{(i)},s_{21}^{(i)},s_{12}^{(i)}\}$,$\{s_{25}^{(i)},R_{11}^{(i)},R_{10}^{(i)}\}$,$\{R_{5}^{(i)},R_{7}^{(i)}\}$& 5\\ \hline
$s_{11}^{(i+1)}$&$\{s_{11}^{(i)}\}$,$\{s_{16}^{(i)},s_{2}^{(i)},s_{0}^{(i)}\}$,$\{s_{25}^{(i)},s_{20}^{(i)},s_{9}^{(i)}\}$,$\{s_{27}^{(i)},s_{22}^{(i)},s_{13}^{(i)}\}$,$\{s_{26}^{(i)},R_{12}^{(i)},R_{11}^{(i)}\}$,$\{R_{6}^{(i)},R_{8}^{(i)}\}$& 6\\ \hline
$s_{12}^{(i+1)}$&$\{s_{12}^{(i)}\}$,$\{s_{17}^{(i)},s_{3}^{(i)},s_{1}^{(i)}\}$,$\{s_{26}^{(i)},s_{21}^{(i)},s_{10}^{(i)}\}$,$\{s_{27}^{(i)},R_{13}^{(i)},R_{12}^{(i)}\}$,$\{R_{7}^{(i)},R_{9}^{(i)}\}$& 5\\ \hline
$s_{13}^{(i+1)}$&$\{s_{13}^{(i)}\}$,$\{s_{18}^{(i)},s_{4}^{(i)},s_{2}^{(i)}\}$,$\{s_{27}^{(i)},s_{22}^{(i)},s_{11}^{(i)}\}$,$\{R_{8}^{(i)},R_{10}^{(i)}\}$& 4\\ \hline
$s_{14}^{(i+1)}$&$\{s_{14}^{(i)}\}$,$\{s_{19}^{(i)},s_{5}^{(i)},s_{3}^{(i)}\}$,$\{R_{9}^{(i)},R_{11}^{(i)}\}$& 3\\ \hline
$s_{15}^{(i+1)}$&$\{s_{15}^{(i)}\}$,$\{s_{20}^{(i)},s_{6}^{(i)},s_{4}^{(i)}\}$,$\{R_{1}^{(i)},R_{0}^{(i)},s_{0}^{(i)}\}$,$\{R_{10}^{(i)},R_{12}^{(i)}\}$& 4\\ \hline
$s_{16}^{(i+1)}$&$\{s_{16}^{(i)}\}$,$\{s_{11}^{(i)},s_{2}^{(i)},s_{0}^{(i)}\}$,$\{s_{21}^{(i)},s_{7}^{(i)},s_{5}^{(i)}\}$,$\{R_{2}^{(i)},R_{1}^{(i)},s_{1}^{(i)}\}$,$\{R_{11}^{(i)},R_{13}^{(i)}\}$& 5\\ \hline
$s_{17}^{(i+1)}$&$\{s_{17}^{(i)}\}$,$\{s_{12}^{(i)},s_{3}^{(i)},s_{1}^{(i)}\}$,$\{s_{22}^{(i)},s_{8}^{(i)},s_{6}^{(i)}\}$,$\{R_{3}^{(i)},R_{2}^{(i)},s_{2}^{(i)}\}$& 4\\ \hline
$s_{18}^{(i+1)}$&$\{s_{18}^{(i)}\}$,$\{s_{13}^{(i)},s_{4}^{(i)},s_{2}^{(i)}\}$,$\{s_{23}^{(i)},s_{9}^{(i)},s_{7}^{(i)}\}$,$\{R_{4}^{(i)},R_{3}^{(i)},s_{3}^{(i)}\}$& 4\\ \hline
$s_{19}^{(i+1)}$&$\{s_{19}^{(i)}\}$,$\{s_{14}^{(i)},s_{5}^{(i)},s_{3}^{(i)}\}$,$\{s_{24}^{(i)},s_{10}^{(i)},s_{8}^{(i)}\}$,$\{R_{5}^{(i)},R_{4}^{(i)},s_{4}^{(i)}\}$& 4\\ \hline
$s_{20}^{(i+1)}$&$\{s_{20}^{(i)}\}$,$\{s_{15}^{(i)},s_{6}^{(i)},s_{4}^{(i)}\}$,$\{s_{25}^{(i)},s_{11}^{(i)},s_{9}^{(i)}\}$,$\{R_{6}^{(i)},R_{5}^{(i)},s_{5}^{(i)}\}$& 4\\ \hline
$s_{21}^{(i+1)}$&$\{s_{21}^{(i)}\}$,$\{s_{16}^{(i)},s_{7}^{(i)},s_{5}^{(i)}\}$,$\{s_{26}^{(i)},s_{12}^{(i)},s_{10}^{(i)}\}$,$\{R_{7}^{(i)},R_{6}^{(i)},s_{6}^{(i)}\}$& 4\\ \hline
$s_{22}^{(i+1)}$&$\{s_{22}^{(i)}\}$,$\{s_{17}^{(i)},s_{8}^{(i)},s_{6}^{(i)}\}$,$\{s_{27}^{(i)},s_{13}^{(i)},s_{11}^{(i)}\}$,$\{R_{8}^{(i)},R_{7}^{(i)},s_{7}^{(i)}\}$& 4\\ \hline
$s_{23}^{(i+1)}$&$\{s_{23}^{(i)}\}$,$\{s_{18}^{(i)},s_{9}^{(i)},s_{7}^{(i)}\}$,$\{R_{9}^{(i)},R_{8}^{(i)},s_{8}^{(i)}\}$& 3\\ \hline
$s_{24}^{(i+1)}$&$\{s_{24}^{(i)}\}$,$\{s_{19}^{(i)},s_{10}^{(i)},s_{8}^{(i)}\}$,$\{R_{10}^{(i)},R_{9}^{(i)},s_{9}^{(i)}\}$& 3\\ \hline
$s_{25}^{(i+1)}$&$\{s_{25}^{(i)}\}$,$\{s_{20}^{(i)},s_{11}^{(i)},s_{9}^{(i)}\}$,$\{R_{11}^{(i)},R_{10}^{(i)},s_{10}^{(i)}\}$& 3\\ \hline
$s_{26}^{(i+1)}$&$\{s_{26}^{(i)}\}$,$\{s_{21}^{(i)},s_{12}^{(i)},s_{10}^{(i)}\}$,$\{R_{12}^{(i)},R_{11}^{(i)},s_{11}^{(i)}\}$& 3\\ \hline
$s_{27}^{(i+1)}$&$\{s_{27}^{(i)}\}$,$\{s_{22}^{(i)},s_{13}^{(i)},s_{11}^{(i)}\}$,$\{R_{13}^{(i)},R_{12}^{(i)},s_{12}^{(i)}\}$& 3\\ \hline
$R_{0}^{(i+1)}$&$\{R_{0}^{(i)}\}$,$\{s_{15}^{(i)},R_{1}^{(i)},s_{0}^{(i)}\}$,$\{s_{5}^{(i)},R_{2}^{(i)}\}$& 3\\ \hline
$R_{1}^{(i+1)}$&$\{R_{1}^{(i)}\}$,$\{s_{15}^{(i)},R_{0}^{(i)},s_{0}^{(i)}\}$,$\{s_{16}^{(i)},R_{2}^{(i)},s_{1}^{(i)}\}$,$\{s_{6}^{(i)},R_{3}^{(i)}\}$& 4\\ \hline
$R_{2}^{(i+1)}$&$\{R_{2}^{(i)}\}$,$\{s_{16}^{(i)},R_{1}^{(i)},s_{1}^{(i)}\}$,$\{s_{17}^{(i)},R_{3}^{(i)},s_{2}^{(i)}\}$,$\{s_{7}^{(i)},R_{4}^{(i)}\}$,$\{s_{5}^{(i)},R_{0}^{(i)}\}$& 5\\ \hline
$R_{3}^{(i+1)}$&$\{R_{3}^{(i)}\}$,$\{s_{17}^{(i)},R_{2}^{(i)},s_{2}^{(i)}\}$,$\{s_{18}^{(i)},R_{4}^{(i)},s_{3}^{(i)}\}$,$\{s_{8}^{(i)},R_{5}^{(i)}\}$,$\{s_{6}^{(i)},R_{1}^{(i)}\}$& 5\\ \hline
$R_{4}^{(i+1)}$&$\{R_{4}^{(i)}\}$,$\{s_{18}^{(i)},R_{3}^{(i)},s_{3}^{(i)}\}$,$\{s_{19}^{(i)},R_{5}^{(i)},s_{4}^{(i)}\}$,$\{s_{9}^{(i)},R_{6}^{(i)}\}$,$\{s_{7}^{(i)},R_{2}^{(i)}\}$& 5\\ \hline
$R_{5}^{(i+1)}$&$\{R_{5}^{(i)}\}$,$\{s_{19}^{(i)},R_{4}^{(i)},s_{4}^{(i)}\}$,$\{s_{20}^{(i)},R_{6}^{(i)},s_{5}^{(i)}\}$,$\{s_{10}^{(i)},R_{7}^{(i)}\}$,$\{s_{8}^{(i)},R_{3}^{(i)}\}$& 5\\ \hline
$R_{6}^{(i+1)}$&$\{R_{6}^{(i)}\}$,$\{s_{20}^{(i)},R_{5}^{(i)},s_{5}^{(i)}\}$,$\{s_{21}^{(i)},R_{7}^{(i)},s_{6}^{(i)}\}$,$\{s_{11}^{(i)},R_{8}^{(i)}\}$,$\{s_{9}^{(i)},R_{4}^{(i)}\}$& 5\\ \hline
$R_{7}^{(i+1)}$&$\{R_{7}^{(i)}\}$,$\{s_{21}^{(i)},R_{6}^{(i)},s_{6}^{(i)}\}$,$\{s_{22}^{(i)},R_{8}^{(i)},s_{7}^{(i)}\}$,$\{s_{12}^{(i)},R_{9}^{(i)}\}$,$\{s_{10}^{(i)},R_{5}^{(i)}\}$& 5\\ \hline
$R_{8}^{(i+1)}$&$\{R_{8}^{(i)}\}$,$\{s_{22}^{(i)},R_{7}^{(i)},s_{7}^{(i)}\}$,$\{s_{23}^{(i)},R_{9}^{(i)},s_{8}^{(i)}\}$,$\{s_{13}^{(i)},R_{10}^{(i)}\}$,$\{s_{11}^{(i)},R_{6}^{(i)}\}$& 5\\ \hline
$R_{9}^{(i+1)}$&$\{R_{9}^{(i)}\}$,$\{s_{23}^{(i)},R_{8}^{(i)},s_{8}^{(i)}\}$,$\{s_{24}^{(i)},R_{10}^{(i)},s_{9}^{(i)}\}$,$\{s_{14}^{(i)},R_{11}^{(i)}\}$,$\{s_{12}^{(i)},R_{7}^{(i)}\}$& 5\\ \hline
$R_{10}^{(i+1)}$&$\{R_{10}^{(i)}\}$,$\{s_{24}^{(i)},R_{9}^{(i)},s_{9}^{(i)}\}$,$\{s_{25}^{(i)},R_{11}^{(i)},s_{10}^{(i)}\}$,$\{s_{15}^{(i)},R_{12}^{(i)}\}$,$\{s_{13}^{(i)},R_{8}^{(i)}\}$& 5\\ \hline
$R_{11}^{(i+1)}$&$\{R_{11}^{(i)}\}$,$\{s_{25}^{(i)},R_{10}^{(i)},s_{10}^{(i)}\}$,$\{s_{26}^{(i)},R_{12}^{(i)},s_{11}^{(i)}\}$,$\{s_{16}^{(i)},R_{13}^{(i)}\}$,$\{s_{14}^{(i)},R_{9}^{(i)}\}$& 5\\ \hline
$R_{12}^{(i+1)}$&$\{R_{12}^{(i)}\}$,$\{s_{26}^{(i)},R_{11}^{(i)},s_{11}^{(i)}\}$,$\{s_{27}^{(i)},R_{13}^{(i)},s_{12}^{(i)}\}$,$\{s_{15}^{(i)},R_{10}^{(i)}\}$& 4\\ \hline
$R_{13}^{(i+1)}$&$\{R_{13}^{(i)}\}$,$\{s_{27}^{(i)},R_{12}^{(i)},s_{12}^{(i)}\}$,$\{s_{16}^{(i)},R_{11}^{(i)}\}$& 3\\ \hline
\end{tabular}
\end{table}

\section{State variables and theirs paths of ENOCORO-128v2 with $T = 16$}
  \begin{table}[]\tiny\center
\begin{tabular}{|c|l|c|}
\hline
\multicolumn{1}{|c|}{State Var.} & \multicolumn{1}{c|}{Paths} & \multicolumn{1}{c|}{Num.}\\ \hline
$a_{0}^{(i+1)}$&$\{a_{0}^{(i)}\}$,$\{b_{3}^{(i)},e_{0}^{(i)}\}$,$\{f_{0}^{(i)},b_{0}^{(i)}\}$& 3\\ \hline
$a_{1}^{(i+1)}$&$\{a_{1}^{(i)}\}$,$\{b_{4}^{(i)},e_{1}^{(i)}\}$,$\{f_{1}^{(i)},b_{1}^{(i)}\}$,$\{g_{0}^{(i)},d_{0}^{(i)}\}$& 4\\ \hline
$a_{2}^{(i+1)}$&$\{a_{2}^{(i)}\}$,$\{b_{5}^{(i)},e_{2}^{(i)}\}$,$\{f_{2}^{(i)},b_{2}^{(i)}\}$,$\{g_{1}^{(i)},d_{1}^{(i)}\}$& 4\\ \hline
$a_{3}^{(i+1)}$&$\{a_{3}^{(i)}\}$,$\{b_{6}^{(i)},e_{3}^{(i)}\}$,$\{f_{3}^{(i)},b_{3}^{(i)}\}$,$\{g_{2}^{(i)},d_{2}^{(i)}\}$& 4\\ \hline
$a_{4}^{(i+1)}$&$\{a_{4}^{(i)}\}$,$\{b_{7}^{(i)},e_{4}^{(i)}\}$,$\{f_{4}^{(i)},b_{4}^{(i)}\}$,$\{g_{3}^{(i)},d_{3}^{(i)}\}$& 4\\ \hline
$a_{5}^{(i+1)}$&$\{a_{5}^{(i)}\}$,$\{b_{8}^{(i)},e_{5}^{(i)}\}$,$\{f_{5}^{(i)},b_{5}^{(i)}\}$,$\{g_{4}^{(i)},d_{4}^{(i)}\}$& 4\\ \hline
$a_{6}^{(i+1)}$&$\{a_{6}^{(i)}\}$,$\{b_{9}^{(i)},e_{6}^{(i)}\}$,$\{f_{6}^{(i)},b_{6}^{(i)}\}$,$\{g_{5}^{(i)},d_{5}^{(i)}\}$& 4\\ \hline
$a_{7}^{(i+1)}$&$\{a_{7}^{(i)}\}$,$\{b_{10}^{(i)},e_{7}^{(i)}\}$,$\{f_{7}^{(i)},b_{7}^{(i)}\}$,$\{g_{6}^{(i)},d_{6}^{(i)}\}$& 4\\ \hline
$a_{8}^{(i+1)}$&$\{a_{8}^{(i)}\}$,$\{b_{11}^{(i)},e_{8}^{(i)}\}$,$\{f_{8}^{(i)},b_{8}^{(i)}\}$,$\{g_{7}^{(i)},d_{7}^{(i)}\}$& 4\\ \hline
$a_{9}^{(i+1)}$&$\{a_{9}^{(i)}\}$,$\{b_{12}^{(i)},e_{9}^{(i)}\}$,$\{f_{9}^{(i)},b_{9}^{(i)}\}$,$\{g_{8}^{(i)},d_{8}^{(i)}\}$& 4\\ \hline
$a_{10}^{(i+1)}$&$\{a_{10}^{(i)}\}$,$\{b_{13}^{(i)},e_{10}^{(i)}\}$,$\{f_{10}^{(i)},b_{10}^{(i)}\}$,$\{g_{9}^{(i)},d_{9}^{(i)}\}$& 4\\ \hline
$a_{11}^{(i+1)}$&$\{a_{11}^{(i)}\}$,$\{b_{14}^{(i)},e_{11}^{(i)}\}$,$\{f_{11}^{(i)},b_{11}^{(i)}\}$,$\{g_{10}^{(i)},d_{10}^{(i)}\}$& 4\\ \hline
$a_{12}^{(i+1)}$&$\{a_{12}^{(i)}\}$,$\{f_{12}^{(i)},b_{12}^{(i)}\}$,$\{g_{11}^{(i)},d_{11}^{(i)}\}$& 3\\ \hline
$a_{13}^{(i+1)}$&$\{a_{13}^{(i)}\}$,$\{f_{13}^{(i)},b_{13}^{(i)}\}$,$\{g_{12}^{(i)},d_{12}^{(i)}\}$& 3\\ \hline
$a_{14}^{(i+1)}$&$\{a_{14}^{(i)}\}$,$\{f_{14}^{(i)},b_{14}^{(i)}\}$,$\{g_{13}^{(i)},d_{13}^{(i)}\}$& 3\\ \hline
$a_{15}^{(i+1)}$&$\{a_{15}^{(i)}\}$,$\{g_{14}^{(i)},d_{14}^{(i)}\}$& 2\\ \hline
$b_{0}^{(i+1)}$&$\{b_{0}^{(i)}\}$,$\{c_{5}^{(i)},c_{1}^{(i)}\}$,$\{f_{0}^{(i)},a_{0}^{(i)}\}$& 3\\ \hline
$b_{1}^{(i+1)}$&$\{b_{1}^{(i)}\}$,$\{c_{6}^{(i)},c_{2}^{(i)}\}$,$\{f_{1}^{(i)},a_{1}^{(i)}\}$& 3\\ \hline
$b_{2}^{(i+1)}$&$\{b_{2}^{(i)}\}$,$\{c_{7}^{(i)},c_{3}^{(i)}\}$,$\{f_{2}^{(i)},a_{2}^{(i)}\}$& 3\\ \hline
$b_{3}^{(i+1)}$&$\{b_{3}^{(i)}\}$,$\{a_{0}^{(i)},e_{0}^{(i)}\}$,$\{c_{8}^{(i)},c_{4}^{(i)}\}$,$\{f_{3}^{(i)},a_{3}^{(i)}\}$& 4\\ \hline
$b_{4}^{(i+1)}$&$\{b_{4}^{(i)}\}$,$\{a_{1}^{(i)},e_{1}^{(i)}\}$,$\{c_{9}^{(i)},c_{5}^{(i)}\}$,$\{f_{4}^{(i)},a_{4}^{(i)}\}$& 4\\ \hline
$b_{5}^{(i+1)}$&$\{b_{5}^{(i)}\}$,$\{a_{2}^{(i)},e_{2}^{(i)}\}$,$\{c_{10}^{(i)},c_{6}^{(i)}\}$,$\{f_{5}^{(i)},a_{5}^{(i)}\}$& 4\\ \hline
$b_{6}^{(i+1)}$&$\{b_{6}^{(i)}\}$,$\{a_{3}^{(i)},e_{3}^{(i)}\}$,$\{c_{11}^{(i)},c_{7}^{(i)}\}$,$\{f_{6}^{(i)},a_{6}^{(i)}\}$& 4\\ \hline
$b_{7}^{(i+1)}$&$\{b_{7}^{(i)}\}$,$\{a_{4}^{(i)},e_{4}^{(i)}\}$,$\{c_{12}^{(i)},c_{8}^{(i)}\}$,$\{f_{7}^{(i)},a_{7}^{(i)}\}$& 4\\ \hline
$b_{8}^{(i+1)}$&$\{b_{8}^{(i)}\}$,$\{a_{5}^{(i)},e_{5}^{(i)}\}$,$\{c_{13}^{(i)},c_{9}^{(i)}\}$,$\{f_{8}^{(i)},a_{8}^{(i)}\}$& 4\\ \hline
$b_{9}^{(i+1)}$&$\{b_{9}^{(i)}\}$,$\{a_{6}^{(i)},e_{6}^{(i)}\}$,$\{c_{14}^{(i)},c_{10}^{(i)}\}$,$\{f_{9}^{(i)},a_{9}^{(i)}\}$& 4\\ \hline
$b_{10}^{(i+1)}$&$\{b_{10}^{(i)}\}$,$\{a_{7}^{(i)},e_{7}^{(i)}\}$,$\{f_{10}^{(i)},a_{10}^{(i)}\}$& 3\\ \hline
$b_{11}^{(i+1)}$&$\{b_{11}^{(i)}\}$,$\{a_{8}^{(i)},e_{8}^{(i)}\}$,$\{f_{11}^{(i)},a_{11}^{(i)}\}$& 3\\ \hline
$b_{12}^{(i+1)}$&$\{b_{12}^{(i)}\}$,$\{a_{9}^{(i)},e_{9}^{(i)}\}$,$\{f_{12}^{(i)},a_{12}^{(i)}\}$& 3\\ \hline
$b_{13}^{(i+1)}$&$\{b_{13}^{(i)}\}$,$\{a_{10}^{(i)},e_{10}^{(i)}\}$,$\{f_{13}^{(i)},a_{13}^{(i)}\}$& 3\\ \hline
$b_{14}^{(i+1)}$&$\{b_{14}^{(i)}\}$,$\{a_{11}^{(i)},e_{11}^{(i)}\}$,$\{f_{14}^{(i)},a_{14}^{(i)}\}$& 3\\ \hline
$c_{0}^{(i+1)}$&$\{c_{0}^{(i)}\}$,$\{d_{9}^{(i)},d_{1}^{(i)}\}$,$\{f_{0}^{(i)},g_{0}^{(i)}\}$,$\{g_{0}^{(i)},e_{2}^{(i)}\}$,$\{f_{0}^{(i)},e_{2}^{(i)}\}$& 5\\ \hline
$c_{1}^{(i+1)}$&$\{c_{1}^{(i)}\}$,$\{c_{5}^{(i)},b_{0}^{(i)}\}$,$\{d_{10}^{(i)},d_{2}^{(i)}\}$,$\{f_{1}^{(i)},g_{1}^{(i)}\}$,$\{g_{1}^{(i)},e_{3}^{(i)}\}$,$\{f_{1}^{(i)},e_{3}^{(i)}\}$& 6\\ \hline
$c_{2}^{(i+1)}$&$\{c_{2}^{(i)}\}$,$\{c_{6}^{(i)},b_{1}^{(i)}\}$,$\{d_{11}^{(i)},d_{3}^{(i)}\}$,$\{f_{2}^{(i)},g_{2}^{(i)}\}$,$\{g_{2}^{(i)},e_{4}^{(i)}\}$,$\{f_{2}^{(i)},e_{4}^{(i)}\}$& 6\\ \hline
$c_{3}^{(i+1)}$&$\{c_{3}^{(i)}\}$,$\{c_{7}^{(i)},b_{2}^{(i)}\}$,$\{d_{12}^{(i)},d_{4}^{(i)}\}$,$\{f_{3}^{(i)},g_{3}^{(i)}\}$,$\{g_{3}^{(i)},e_{5}^{(i)}\}$,$\{f_{3}^{(i)},e_{5}^{(i)}\}$& 6\\ \hline
$c_{4}^{(i+1)}$&$\{c_{4}^{(i)}\}$,$\{c_{8}^{(i)},b_{3}^{(i)}\}$,$\{d_{13}^{(i)},d_{5}^{(i)}\}$,$\{f_{4}^{(i)},g_{4}^{(i)}\}$,$\{g_{4}^{(i)},e_{6}^{(i)}\}$,$\{f_{4}^{(i)},e_{6}^{(i)}\}$& 6\\ \hline
$c_{5}^{(i+1)}$&$\{c_{5}^{(i)}\}$,$\{b_{0}^{(i)},c_{1}^{(i)}\}$,$\{c_{9}^{(i)},b_{4}^{(i)}\}$,$\{d_{14}^{(i)},d_{6}^{(i)}\}$,$\{f_{5}^{(i)},g_{5}^{(i)}\}$,$\{g_{5}^{(i)},e_{7}^{(i)}\}$,$\{f_{5}^{(i)},e_{7}^{(i)}\}$& 7\\ \hline
$c_{6}^{(i+1)}$&$\{c_{6}^{(i)}\}$,$\{b_{1}^{(i)},c_{2}^{(i)}\}$,$\{c_{10}^{(i)},b_{5}^{(i)}\}$,$\{f_{6}^{(i)},g_{6}^{(i)}\}$,$\{g_{6}^{(i)},e_{8}^{(i)}\}$,$\{f_{6}^{(i)},e_{8}^{(i)}\}$& 6\\ \hline
$c_{7}^{(i+1)}$&$\{c_{7}^{(i)}\}$,$\{b_{2}^{(i)},c_{3}^{(i)}\}$,$\{c_{11}^{(i)},b_{6}^{(i)}\}$,$\{f_{7}^{(i)},g_{7}^{(i)}\}$,$\{g_{7}^{(i)},e_{9}^{(i)}\}$,$\{f_{7}^{(i)},e_{9}^{(i)}\}$& 6\\ \hline
$c_{8}^{(i+1)}$&$\{c_{8}^{(i)}\}$,$\{b_{3}^{(i)},c_{4}^{(i)}\}$,$\{c_{12}^{(i)},b_{7}^{(i)}\}$,$\{f_{8}^{(i)},g_{8}^{(i)}\}$,$\{g_{8}^{(i)},e_{10}^{(i)}\}$,$\{f_{8}^{(i)},e_{10}^{(i)}\}$& 6\\ \hline
$c_{9}^{(i+1)}$&$\{c_{9}^{(i)}\}$,$\{b_{4}^{(i)},c_{5}^{(i)}\}$,$\{c_{13}^{(i)},b_{8}^{(i)}\}$,$\{f_{9}^{(i)},g_{9}^{(i)}\}$,$\{g_{9}^{(i)},e_{11}^{(i)}\}$,$\{f_{9}^{(i)},e_{11}^{(i)}\}$& 6\\ \hline
$c_{10}^{(i+1)}$&$\{c_{10}^{(i)}\}$,$\{b_{5}^{(i)},c_{6}^{(i)}\}$,$\{c_{14}^{(i)},b_{9}^{(i)}\}$,$\{f_{10}^{(i)},g_{10}^{(i)}\}$,$\{g_{10}^{(i)},e_{12}^{(i)}\}$,$\{f_{10}^{(i)},e_{12}^{(i)}\}$& 6\\ \hline
\end{tabular}
\end{table}

\begin{table}[]\tiny\center
\begin{tabular}{|c|l|c|}
\hline
\multicolumn{1}{|c|}{State Var.} & \multicolumn{1}{c|}{Paths} & \multicolumn{1}{c|}{Num.}\\ \hline
$c_{11}^{(i+1)}$&$\{c_{11}^{(i)}\}$,$\{b_{6}^{(i)},c_{7}^{(i)}\}$,$\{f_{11}^{(i)},g_{11}^{(i)}\}$,$\{g_{11}^{(i)},e_{13}^{(i)}\}$,$\{f_{11}^{(i)},e_{13}^{(i)}\}$& 5\\ \hline
$c_{12}^{(i+1)}$&$\{c_{12}^{(i)}\}$,$\{b_{7}^{(i)},c_{8}^{(i)}\}$,$\{f_{12}^{(i)},g_{12}^{(i)}\}$,$\{g_{12}^{(i)},e_{14}^{(i)}\}$,$\{f_{12}^{(i)},e_{14}^{(i)}\}$& 5\\ \hline
$c_{13}^{(i+1)}$&$\{c_{13}^{(i)}\}$,$\{b_{8}^{(i)},c_{9}^{(i)}\}$,$\{f_{13}^{(i)},g_{13}^{(i)}\}$,$\{g_{13}^{(i)},e_{15}^{(i)}\}$,$\{f_{13}^{(i)},e_{15}^{(i)}\}$& 5\\ \hline
$c_{14}^{(i+1)}$&$\{c_{14}^{(i)}\}$,$\{b_{9}^{(i)},c_{10}^{(i)}\}$,$\{f_{14}^{(i)},g_{14}^{(i)}\}$,$\{g_{14}^{(i)},e_{16}^{(i)}\}$,$\{f_{14}^{(i)},e_{16}^{(i)}\}$& 5\\ \hline
$d_{0}^{(i+1)}$&$\{d_{0}^{(i)}\}$,$\{e_{15}^{(i)},e_{3}^{(i)}\}$,$\{g_{0}^{(i)},a_{1}^{(i)}\}$& 3\\ \hline
$d_{1}^{(i+1)}$&$\{d_{1}^{(i)}\}$,$\{d_{9}^{(i)},c_{0}^{(i)}\}$,$\{e_{16}^{(i)},e_{4}^{(i)}\}$,$\{g_{1}^{(i)},a_{2}^{(i)}\}$& 4\\ \hline
$d_{2}^{(i+1)}$&$\{d_{2}^{(i)}\}$,$\{d_{10}^{(i)},c_{1}^{(i)}\}$,$\{g_{2}^{(i)},a_{3}^{(i)}\}$& 3\\ \hline
$d_{3}^{(i+1)}$&$\{d_{3}^{(i)}\}$,$\{d_{11}^{(i)},c_{2}^{(i)}\}$,$\{g_{3}^{(i)},a_{4}^{(i)}\}$& 3\\ \hline
$d_{4}^{(i+1)}$&$\{d_{4}^{(i)}\}$,$\{d_{12}^{(i)},c_{3}^{(i)}\}$,$\{g_{4}^{(i)},a_{5}^{(i)}\}$& 3\\ \hline
$d_{5}^{(i+1)}$&$\{d_{5}^{(i)}\}$,$\{d_{13}^{(i)},c_{4}^{(i)}\}$,$\{g_{5}^{(i)},a_{6}^{(i)}\}$& 3\\ \hline
$d_{6}^{(i+1)}$&$\{d_{6}^{(i)}\}$,$\{d_{14}^{(i)},c_{5}^{(i)}\}$,$\{g_{6}^{(i)},a_{7}^{(i)}\}$& 3\\ \hline
$d_{7}^{(i+1)}$&$\{d_{7}^{(i)}\}$,$\{g_{7}^{(i)},a_{8}^{(i)}\}$& 2\\ \hline
$d_{8}^{(i+1)}$&$\{d_{8}^{(i)}\}$,$\{g_{8}^{(i)},a_{9}^{(i)}\}$& 2\\ \hline
$d_{9}^{(i+1)}$&$\{d_{9}^{(i)}\}$,$\{c_{0}^{(i)},d_{1}^{(i)}\}$,$\{g_{9}^{(i)},a_{10}^{(i)}\}$& 3\\ \hline
$d_{10}^{(i+1)}$&$\{d_{10}^{(i)}\}$,$\{c_{1}^{(i)},d_{2}^{(i)}\}$,$\{g_{10}^{(i)},a_{11}^{(i)}\}$& 3\\ \hline
$d_{11}^{(i+1)}$&$\{d_{11}^{(i)}\}$,$\{c_{2}^{(i)},d_{3}^{(i)}\}$,$\{g_{11}^{(i)},a_{12}^{(i)}\}$& 3\\ \hline
$d_{12}^{(i+1)}$&$\{d_{12}^{(i)}\}$,$\{c_{3}^{(i)},d_{4}^{(i)}\}$,$\{g_{12}^{(i)},a_{13}^{(i)}\}$& 3\\ \hline
$d_{13}^{(i+1)}$&$\{d_{13}^{(i)}\}$,$\{c_{4}^{(i)},d_{5}^{(i)}\}$,$\{g_{13}^{(i)},a_{14}^{(i)}\}$& 3\\ \hline
$d_{14}^{(i+1)}$&$\{d_{14}^{(i)}\}$,$\{c_{5}^{(i)},d_{6}^{(i)}\}$,$\{g_{14}^{(i)},a_{15}^{(i)}\}$& 3\\ \hline
$e_{0}^{(i+1)}$&$\{e_{0}^{(i)}\}$,$\{b_{3}^{(i)},a_{0}^{(i)}\}$& 2\\ \hline
$e_{1}^{(i+1)}$&$\{e_{1}^{(i)}\}$,$\{b_{4}^{(i)},a_{1}^{(i)}\}$& 2\\ \hline
$e_{2}^{(i+1)}$&$\{e_{2}^{(i)}\}$,$\{b_{5}^{(i)},a_{2}^{(i)}\}$,$\{f_{0}^{(i)},g_{0}^{(i)}\}$,$\{g_{0}^{(i)},c_{0}^{(i)}\}$,$\{f_{0}^{(i)},c_{0}^{(i)}\}$& 5\\ \hline
$e_{3}^{(i+1)}$&$\{e_{3}^{(i)}\}$,$\{b_{6}^{(i)},a_{3}^{(i)}\}$,$\{e_{15}^{(i)},d_{0}^{(i)}\}$,$\{f_{1}^{(i)},g_{1}^{(i)}\}$,$\{g_{1}^{(i)},c_{1}^{(i)}\}$,$\{f_{1}^{(i)},c_{1}^{(i)}\}$& 6\\ \hline
$e_{4}^{(i+1)}$&$\{e_{4}^{(i)}\}$,$\{b_{7}^{(i)},a_{4}^{(i)}\}$,$\{e_{16}^{(i)},d_{1}^{(i)}\}$,$\{f_{2}^{(i)},g_{2}^{(i)}\}$,$\{g_{2}^{(i)},c_{2}^{(i)}\}$,$\{f_{2}^{(i)},c_{2}^{(i)}\}$& 6\\ \hline
$e_{5}^{(i+1)}$&$\{e_{5}^{(i)}\}$,$\{b_{8}^{(i)},a_{5}^{(i)}\}$,$\{f_{3}^{(i)},g_{3}^{(i)}\}$,$\{g_{3}^{(i)},c_{3}^{(i)}\}$,$\{f_{3}^{(i)},c_{3}^{(i)}\}$& 5\\ \hline
$e_{6}^{(i+1)}$&$\{e_{6}^{(i)}\}$,$\{b_{9}^{(i)},a_{6}^{(i)}\}$,$\{f_{4}^{(i)},g_{4}^{(i)}\}$,$\{g_{4}^{(i)},c_{4}^{(i)}\}$,$\{f_{4}^{(i)},c_{4}^{(i)}\}$& 5\\ \hline
$e_{7}^{(i+1)}$&$\{e_{7}^{(i)}\}$,$\{b_{10}^{(i)},a_{7}^{(i)}\}$,$\{f_{5}^{(i)},g_{5}^{(i)}\}$,$\{g_{5}^{(i)},c_{5}^{(i)}\}$,$\{f_{5}^{(i)},c_{5}^{(i)}\}$& 5\\ \hline
$e_{8}^{(i+1)}$&$\{e_{8}^{(i)}\}$,$\{b_{11}^{(i)},a_{8}^{(i)}\}$,$\{f_{6}^{(i)},g_{6}^{(i)}\}$,$\{g_{6}^{(i)},c_{6}^{(i)}\}$,$\{f_{6}^{(i)},c_{6}^{(i)}\}$& 5\\ \hline
$e_{9}^{(i+1)}$&$\{e_{9}^{(i)}\}$,$\{b_{12}^{(i)},a_{9}^{(i)}\}$,$\{f_{7}^{(i)},g_{7}^{(i)}\}$,$\{g_{7}^{(i)},c_{7}^{(i)}\}$,$\{f_{7}^{(i)},c_{7}^{(i)}\}$& 5\\ \hline
$e_{10}^{(i+1)}$&$\{e_{10}^{(i)}\}$,$\{b_{13}^{(i)},a_{10}^{(i)}\}$,$\{f_{8}^{(i)},g_{8}^{(i)}\}$,$\{g_{8}^{(i)},c_{8}^{(i)}\}$,$\{f_{8}^{(i)},c_{8}^{(i)}\}$& 5\\ \hline
$e_{11}^{(i+1)}$&$\{e_{11}^{(i)}\}$,$\{b_{14}^{(i)},a_{11}^{(i)}\}$,$\{f_{9}^{(i)},g_{9}^{(i)}\}$,$\{g_{9}^{(i)},c_{9}^{(i)}\}$,$\{f_{9}^{(i)},c_{9}^{(i)}\}$& 5\\ \hline
$e_{12}^{(i+1)}$&$\{e_{12}^{(i)}\}$,$\{f_{10}^{(i)},g_{10}^{(i)}\}$,$\{g_{10}^{(i)},c_{10}^{(i)}\}$,$\{f_{10}^{(i)},c_{10}^{(i)}\}$& 4\\ \hline
$e_{13}^{(i+1)}$&$\{e_{13}^{(i)}\}$,$\{f_{11}^{(i)},g_{11}^{(i)}\}$,$\{g_{11}^{(i)},c_{11}^{(i)}\}$,$\{f_{11}^{(i)},c_{11}^{(i)}\}$& 4\\ \hline
$e_{14}^{(i+1)}$&$\{e_{14}^{(i)}\}$,$\{f_{12}^{(i)},g_{12}^{(i)}\}$,$\{g_{12}^{(i)},c_{12}^{(i)}\}$,$\{f_{12}^{(i)},c_{12}^{(i)}\}$& 4\\ \hline
$e_{15}^{(i+1)}$&$\{e_{15}^{(i)}\}$,$\{d_{0}^{(i)},e_{3}^{(i)}\}$,$\{f_{13}^{(i)},g_{13}^{(i)}\}$,$\{g_{13}^{(i)},c_{13}^{(i)}\}$,$\{f_{13}^{(i)},c_{13}^{(i)}\}$& 5\\ \hline
$e_{16}^{(i+1)}$&$\{e_{16}^{(i)}\}$,$\{d_{1}^{(i)},e_{4}^{(i)}\}$,$\{f_{14}^{(i)},g_{14}^{(i)}\}$,$\{g_{14}^{(i)},c_{14}^{(i)}\}$,$\{f_{14}^{(i)},c_{14}^{(i)}\}$& 5\\ \hline
$f_{0}^{(i+1)}$&$\{f_{0}^{(i)}\}$,$\{a_{0}^{(i)},b_{0}^{(i)}\}$,$\{g_{0}^{(i)},e_{2}^{(i)}\}$,$\{g_{0}^{(i)},c_{0}^{(i)}\}$,$\{e_{2}^{(i)},c_{0}^{(i)}\}$& 5\\ \hline
$f_{1}^{(i+1)}$&$\{f_{1}^{(i)}\}$,$\{a_{1}^{(i)},b_{1}^{(i)}\}$,$\{g_{1}^{(i)},e_{3}^{(i)}\}$,$\{g_{1}^{(i)},c_{1}^{(i)}\}$,$\{e_{3}^{(i)},c_{1}^{(i)}\}$& 5\\ \hline
$f_{2}^{(i+1)}$&$\{f_{2}^{(i)}\}$,$\{a_{2}^{(i)},b_{2}^{(i)}\}$,$\{g_{2}^{(i)},e_{4}^{(i)}\}$,$\{g_{2}^{(i)},c_{2}^{(i)}\}$,$\{e_{4}^{(i)},c_{2}^{(i)}\}$& 5\\ \hline
$f_{3}^{(i+1)}$&$\{f_{3}^{(i)}\}$,$\{a_{3}^{(i)},b_{3}^{(i)}\}$,$\{g_{3}^{(i)},e_{5}^{(i)}\}$,$\{g_{3}^{(i)},c_{3}^{(i)}\}$,$\{e_{5}^{(i)},c_{3}^{(i)}\}$& 5\\ \hline
$f_{4}^{(i+1)}$&$\{f_{4}^{(i)}\}$,$\{a_{4}^{(i)},b_{4}^{(i)}\}$,$\{g_{4}^{(i)},e_{6}^{(i)}\}$,$\{g_{4}^{(i)},c_{4}^{(i)}\}$,$\{e_{6}^{(i)},c_{4}^{(i)}\}$& 5\\ \hline
$f_{5}^{(i+1)}$&$\{f_{5}^{(i)}\}$,$\{a_{5}^{(i)},b_{5}^{(i)}\}$,$\{g_{5}^{(i)},e_{7}^{(i)}\}$,$\{g_{5}^{(i)},c_{5}^{(i)}\}$,$\{e_{7}^{(i)},c_{5}^{(i)}\}$& 5\\ \hline
\end{tabular}
\end{table}

 \begin{table}[]\tiny\center
\begin{tabular}{|c|l|c|}
\hline
\multicolumn{1}{|c|}{State Var.} & \multicolumn{1}{c|}{Paths} & \multicolumn{1}{c|}{Num.}\\ \hline 
$f_{6}^{(i+1)}$&$\{f_{6}^{(i)}\}$,$\{a_{6}^{(i)},b_{6}^{(i)}\}$,$\{g_{6}^{(i)},e_{8}^{(i)}\}$,$\{g_{6}^{(i)},c_{6}^{(i)}\}$,$\{e_{8}^{(i)},c_{6}^{(i)}\}$& 5\\ \hline
$f_{7}^{(i+1)}$&$\{f_{7}^{(i)}\}$,$\{a_{7}^{(i)},b_{7}^{(i)}\}$,$\{g_{7}^{(i)},e_{9}^{(i)}\}$,$\{g_{7}^{(i)},c_{7}^{(i)}\}$,$\{e_{9}^{(i)},c_{7}^{(i)}\}$& 5\\ \hline
$f_{8}^{(i+1)}$&$\{f_{8}^{(i)}\}$,$\{a_{8}^{(i)},b_{8}^{(i)}\}$,$\{g_{8}^{(i)},e_{10}^{(i)}\}$,$\{g_{8}^{(i)},c_{8}^{(i)}\}$,$\{e_{10}^{(i)},c_{8}^{(i)}\}$& 5\\ \hline
$f_{9}^{(i+1)}$&$\{f_{9}^{(i)}\}$,$\{a_{9}^{(i)},b_{9}^{(i)}\}$,$\{g_{9}^{(i)},e_{11}^{(i)}\}$,$\{g_{9}^{(i)},c_{9}^{(i)}\}$,$\{e_{11}^{(i)},c_{9}^{(i)}\}$& 5\\ \hline
$f_{10}^{(i+1)}$&$\{f_{10}^{(i)}\}$,$\{a_{10}^{(i)},b_{10}^{(i)}\}$,$\{g_{10}^{(i)},e_{12}^{(i)}\}$,$\{g_{10}^{(i)},c_{10}^{(i)}\}$,$\{e_{12}^{(i)},c_{10}^{(i)}\}$& 5\\ \hline
$f_{11}^{(i+1)}$&$\{f_{11}^{(i)}\}$,$\{a_{11}^{(i)},b_{11}^{(i)}\}$,$\{g_{11}^{(i)},e_{13}^{(i)}\}$,$\{g_{11}^{(i)},c_{11}^{(i)}\}$,$\{e_{13}^{(i)},c_{11}^{(i)}\}$& 5\\ \hline
$f_{12}^{(i+1)}$&$\{f_{12}^{(i)}\}$,$\{a_{12}^{(i)},b_{12}^{(i)}\}$,$\{g_{12}^{(i)},e_{14}^{(i)}\}$,$\{g_{12}^{(i)},c_{12}^{(i)}\}$,$\{e_{14}^{(i)},c_{12}^{(i)}\}$& 5\\ \hline
$f_{13}^{(i+1)}$&$\{f_{13}^{(i)}\}$,$\{a_{13}^{(i)},b_{13}^{(i)}\}$,$\{g_{13}^{(i)},e_{15}^{(i)}\}$,$\{g_{13}^{(i)},c_{13}^{(i)}\}$,$\{e_{15}^{(i)},c_{13}^{(i)}\}$& 5\\ \hline
$f_{14}^{(i+1)}$&$\{f_{14}^{(i)}\}$,$\{a_{14}^{(i)},b_{14}^{(i)}\}$,$\{g_{14}^{(i)},e_{16}^{(i)}\}$,$\{g_{14}^{(i)},c_{14}^{(i)}\}$,$\{e_{16}^{(i)},c_{14}^{(i)}\}$& 5\\ \hline
$g_{0}^{(i+1)}$&$\{g_{0}^{(i)}\}$,$\{d_{0}^{(i)},a_{1}^{(i)}\}$,$\{f_{0}^{(i)},e_{2}^{(i)}\}$,$\{f_{0}^{(i)},c_{0}^{(i)}\}$,$\{e_{2}^{(i)},c_{0}^{(i)}\}$& 5\\ \hline
$g_{1}^{(i+1)}$&$\{g_{1}^{(i)}\}$,$\{d_{1}^{(i)},a_{2}^{(i)}\}$,$\{f_{1}^{(i)},e_{3}^{(i)}\}$,$\{f_{1}^{(i)},c_{1}^{(i)}\}$,$\{e_{3}^{(i)},c_{1}^{(i)}\}$& 5\\ \hline
$g_{2}^{(i+1)}$&$\{g_{2}^{(i)}\}$,$\{d_{2}^{(i)},a_{3}^{(i)}\}$,$\{f_{2}^{(i)},e_{4}^{(i)}\}$,$\{f_{2}^{(i)},c_{2}^{(i)}\}$,$\{e_{4}^{(i)},c_{2}^{(i)}\}$& 5\\ \hline
$g_{3}^{(i+1)}$&$\{g_{3}^{(i)}\}$,$\{d_{3}^{(i)},a_{4}^{(i)}\}$,$\{f_{3}^{(i)},e_{5}^{(i)}\}$,$\{f_{3}^{(i)},c_{3}^{(i)}\}$,$\{e_{5}^{(i)},c_{3}^{(i)}\}$& 5\\ \hline
$g_{4}^{(i+1)}$&$\{g_{4}^{(i)}\}$,$\{d_{4}^{(i)},a_{5}^{(i)}\}$,$\{f_{4}^{(i)},e_{6}^{(i)}\}$,$\{f_{4}^{(i)},c_{4}^{(i)}\}$,$\{e_{6}^{(i)},c_{4}^{(i)}\}$& 5\\ \hline
$g_{5}^{(i+1)}$&$\{g_{5}^{(i)}\}$,$\{d_{5}^{(i)},a_{6}^{(i)}\}$,$\{f_{5}^{(i)},e_{7}^{(i)}\}$,$\{f_{5}^{(i)},c_{5}^{(i)}\}$,$\{e_{7}^{(i)},c_{5}^{(i)}\}$& 5\\ \hline
$g_{6}^{(i+1)}$&$\{g_{6}^{(i)}\}$,$\{d_{6}^{(i)},a_{7}^{(i)}\}$,$\{f_{6}^{(i)},e_{8}^{(i)}\}$,$\{f_{6}^{(i)},c_{6}^{(i)}\}$,$\{e_{8}^{(i)},c_{6}^{(i)}\}$& 5\\ \hline
$g_{7}^{(i+1)}$&$\{g_{7}^{(i)}\}$,$\{d_{7}^{(i)},a_{8}^{(i)}\}$,$\{f_{7}^{(i)},e_{9}^{(i)}\}$,$\{f_{7}^{(i)},c_{7}^{(i)}\}$,$\{e_{9}^{(i)},c_{7}^{(i)}\}$& 5\\ \hline
$g_{8}^{(i+1)}$&$\{g_{8}^{(i)}\}$,$\{d_{8}^{(i)},a_{9}^{(i)}\}$,$\{f_{8}^{(i)},e_{10}^{(i)}\}$,$\{f_{8}^{(i)},c_{8}^{(i)}\}$,$\{e_{10}^{(i)},c_{8}^{(i)}\}$& 5\\ \hline
$g_{9}^{(i+1)}$&$\{g_{9}^{(i)}\}$,$\{d_{9}^{(i)},a_{10}^{(i)}\}$,$\{f_{9}^{(i)},e_{11}^{(i)}\}$,$\{f_{9}^{(i)},c_{9}^{(i)}\}$,$\{e_{11}^{(i)},c_{9}^{(i)}\}$& 5\\ \hline
$g_{10}^{(i+1)}$&$\{g_{10}^{(i)}\}$,$\{d_{10}^{(i)},a_{11}^{(i)}\}$,$\{f_{10}^{(i)},e_{12}^{(i)}\}$,$\{f_{10}^{(i)},c_{10}^{(i)}\}$,$\{e_{12}^{(i)},c_{10}^{(i)}\}$& 5\\ \hline
$g_{11}^{(i+1)}$&$\{g_{11}^{(i)}\}$,$\{d_{11}^{(i)},a_{12}^{(i)}\}$,$\{f_{11}^{(i)},e_{13}^{(i)}\}$,$\{f_{11}^{(i)},c_{11}^{(i)}\}$,$\{e_{13}^{(i)},c_{11}^{(i)}\}$& 5\\ \hline
$g_{12}^{(i+1)}$&$\{g_{12}^{(i)}\}$,$\{d_{12}^{(i)},a_{13}^{(i)}\}$,$\{f_{12}^{(i)},e_{14}^{(i)}\}$,$\{f_{12}^{(i)},c_{12}^{(i)}\}$,$\{e_{14}^{(i)},c_{12}^{(i)}\}$& 5\\ \hline
$g_{13}^{(i+1)}$&$\{g_{13}^{(i)}\}$,$\{d_{13}^{(i)},a_{14}^{(i)}\}$,$\{f_{13}^{(i)},e_{15}^{(i)}\}$,$\{f_{13}^{(i)},c_{13}^{(i)}\}$,$\{e_{15}^{(i)},c_{13}^{(i)}\}$& 5\\ \hline
$g_{14}^{(i+1)}$&$\{g_{14}^{(i)}\}$,$\{d_{14}^{(i)},a_{15}^{(i)}\}$,$\{f_{14}^{(i)},e_{16}^{(i)}\}$,$\{f_{14}^{(i)},c_{14}^{(i)}\}$,$\{e_{16}^{(i)},c_{14}^{(i)}\}$& 5\\ \hline
\end{tabular}
\end{table}
  

\begin{thebibliography}{2019}

\bibitem{DIVIDE} T. Siegenthaler, "Decrypting a class of stream ciphers using ciphertext only", IEEE Transaction on Computer, Vol. 34, pp.81–85, 1985. 

\bibitem{F} J. Golic, "Cryptanalysis of alleged A5 stream cipher", EUROCRYPT’97, LNCS 1233, pp.239-255, 1997.

\bibitem{RC4}K. R. Lars, M. Willi, P. Bart, R. Vincent and V. Sven. "Analysis methods for (alleged) RC4", ASIACRYPT’98, pp.327-341, 1998.

\bibitem{WORD} H. Philip and R. G. Gregory, "Exploiting multiples of the connection polynomial in word-oriented stream ciphers", ASIACRYPT 2000, pp.303-316, 2000.

\bibitem{SNOW1.0}E. Patrik and J. Thomas, "SNOW-a new stream cipher", First Open NESSIE Workshop, pp.167-168, 2000.

\bibitem{GD-SOBER}C. D. Christophe, "Guess and determine attack on SOBER", NESSIE Public Document, NES/DOC/KUL/WP5/010/a, http://www.cryptonessie.org, 2001.

\bibitem{SOBER}R. Greg and H. Philip, "The t-class of SOBER stream ciphers", Technical Report, QUALCOMM Australia, Suite 410, Birkenhead Point, Drummoyne NSW 2137, Australia, 1999.

\bibitem{HEURISTIC}H. Ahmad and T. Eghlidos, "Heuristic guess-and-determine attacks on stream ciphers", IET Information Security. Vol. 3, Issue 2, pp.66–73, 2009.

\bibitem{PRUNE}B. Charles, D. Patrick and F. A. Pierre, "Automatic search of attacks on round-reduced AES and applications", CRYPTO 2011, LNCS 6841, pp.169–187, 2011.

\bibitem{LFSR}P. Enes, "On guess and determine cryptanalysis of LFSR-based stream ciphers", IEEE Transactions on Information Theory, Vol. 55,  Issue 7, pp.3398-3406, 2009.

\bibitem{FILTER}Y. Z. Wei, P. Enes and Y. P. Hu, "Guess and determine attacks on filter generators-revisited", IEEE Transactions on Information Theory, Vol. 58, Issue 4 pp.2530-2539, 2012.

\bibitem{SOSEMANUK}X. T. Feng, J. Liu, Z. C. Zhou, C. K. Wu and D. G. Feng, "A byte-based guess and determine attack on SOSEMANUK", ASIACRYPT 2010, pp.146-157, 2010.

\bibitem{Rabbit}X. T. Feng, Z. Q. Shi, C. K. Wu and D. G. Feng, “On guess and determine analysis of Rabbit”, International Journal of Foundations of Computer Science, Vol. 22, pp.1283–1296, 2011.

\bibitem{A2U2}Z. Q. Shi, X. T. Feng, D. G. Feng and C. K. Wu, "A real-time Key Recovery Attack on the Lightweight Stream Cipher A2U2", Cryptology and Network Security, CANS 2012, pp.12-22, 2012

\bibitem{FASER}X. T. Feng and F. Zhang, "A realtime key recovery attack on the authenticated cipher FASER128", IACR Cryptology ePrint Archive, https://eprint.iacr.org/2014/258, 2014.

\bibitem{Sablier}X. T. Feng and F. Zhang, "Cryptanalysis on the authenticated cipher sablier", International Conference on Network and System Security, NSS 2014,  Vol. 8792, pp 198-208, 2014.

\bibitem{PANDA-S}X. T. Feng, F. Zhang and H. WANG, "A practical forgery and state recovery attack on the authenticated cipher PANDA-s", IACR Cryptology ePrint Archive, https://eprint.iacr.org/2014/325, 2014.

\bibitem{SATGD}Z. Oleg and K. Stepan, "An improved SAT-based guess-and-determine attack on the alternating step generator", International Conference on Information Security 2017, LNCS 10599, pp.21-38, 2017.

\bibitem{FIRST}M. Nicky, Q. J. Wang, D. W. Gu and P. Bart, "Differential and linear cryptanalysis using mixed-integer linear programming",  INSCRYPT 2011, LNCS 7537, pp.57–76, 2011. 

\bibitem{BCS}S. B. Wu and M. S. Wang, "Security Evaluation against Differential Cryptanalysis for Block Cipher Structures", IACR Cryptology ePrint Archive, https://eprint.iacr.org/2011/551, 2011.

\bibitem{S-bP}S. W. Sun, L. Hu, L. Song, Y. H. Xie and P. Wang, "Automatic security evaluation of block ciphers with S-bP structures against related-key differential attacks", International Conference on Information Security and Cryptology 2013, LNCS 8567, pp.39-51, 2013.

\bibitem{Characteristic}S. W. Sun, L. Hu, M. Q. Wang, P. Wang, K. X. Qiao, X.S. Ma, D. P. Shi, L. Song and K. Fu, "Towards finding the best characteristics of some bit-oriented block ciphers and automatic enumeration of (related-key) differential and linear characteristics with predefined properties", Technical report, Cryptology ePrint Archive, https://eprint.iacr.org/2014/747, 2014.

\bibitem{ASE}S. W. Sun, L. Hu, P. Wang, K. X. Qiao, X. S. Ma and L. Song, "Automatic security evaluation and (Related-key) differential characteristic search: application to SIMON, PRESENT, LBlock, DES(L) and other Bit-Oriented Block Ciphers", ASIACRYPT 2014, LNCS 8873, pp. 158–178, 2014.

\bibitem{ARX}K. Fu, M. Q. Wang, Y. H. Guo, S. W. Sun and L Hu, "MILP-based automatic search algorithms for differential and linear trails for speck", International Workshop on Fast Software Encryption 2016, LNCS 9783, pp.268-288, 2016.

\bibitem{INPOSSIBLE}T. T. Cui, S. T. Chen, K. T. Jia, K. Fu and M. Q. Wang, "New automatic search tool for impossible differentials and zero-correlation linear approximations", IACR Cryptology ePrint Archive 2016, https://eprint.iacr.org/2016/689, 2016.

\bibitem{INPOSSIBLE2}S. Yu and T. Yosuke, "New impossible differential search tool from design and cryptanalysis aspects revealing structural properties of several ciphers", EUROCRYPT 2017, LNCS 10212, pp.185–215, 2017.

\bibitem{MORUS}D. P. Shi, S. W. Sun, S. Yu, C. Y. Li and L. Hu, "Correlation of quadratic boolean functions: cryptanalysis of all versions of full MORUS", CRYPTO 2019, pp.180-209, 2019.

\bibitem{Integral}Z. J. Xiang, W. T. Zhang, Z. Z. Bao and D. D. Lin, "Applying MILP method to searching integral distinguishers based on division property for 6 lightweight block ciphers", ASIACRYPT 2016, LNCS 10031, pp.648–678, 2016.

\bibitem{MILP}C. Marco, C. Alessandro, P. Massimo and S. D. Yaroslav, "Solving the lexicographic mixed-integer linear programming problem using Branch-and-Bound and Grossone methodology", Communcations in Nonlinear Science and Numerical Simulation, Vol. 84, 105177, https://doi.org/10.1016/j.cnsns.2020.105177, 2020

\bibitem{Gurobi}Gurobi: http://www.gurobi.com/

\bibitem{Cplex}Cplex: https://www.ibm.com/analytics/cplex-optimizer

\bibitem{MiniSat}MiniSat: http://minisat.se


\bibitem{SNOW2.0}E. Patrik and J. Thomas, "A new version of the stream cipher SNOW", SAC 2002, LNCS 2595, pp.47–61, 2002.


\bibitem{SNOWGD}H. Philip and R. G. Gregory, "Guess-and-determine attacks on SNOW", SAC 2002, LNCS 2595, pp.37–46, 2002.

\bibitem{SNOWDis}C. Don, H. Shai and J. Charanjit, "Cryptanalysis of stream ciphers with linear masking", CRYPTO 2002, LNCS 2442, pp. 515–532, 2002.

\bibitem{Standard}ISO/IEC 18033–4: Information technology - Security techniques - Encryption algorithms - Part 4: Stream ciphers, 2011.

\bibitem{Rijndael}J. Daemen and V. Rijmen, ”The design of Rijndael”, Series on Information Security and Cryptography, Springer Verlag, ISBN 3-540-42580-2, 2002.


\bibitem{Enocoro1.0} D. Watanabe and T. Kaneko, “A construction of light weight Panama-like keystream generator", IEICE Technical report, ISEC2007-78, 2007.

\bibitem{Enocoro-128v1.1} K. Muto, D. Watanabe and T. Kaneko, “Strength evaluation of Enocoro-128 against LDA and its Improvement”, Symposium on Cryptography and Information Security, SCIS 2008, 4A1-1, 2008. 

\bibitem{Enocoro-128v2-1} D. Watanabe, K. Okamoto and T. Kaneko, "A Hardware-Oriented Light Weight Pseudo-Random Number Generator Enocoro-128v2", The Symposium on Cryptography and Information Security, 3D1–3, 2010. 

\bibitem{Enocoro-128v2-2}D. Watanabe, T. Owada, K. Okamoto, Y. Igarashi, and T. Kaneko, "Update on Enocoro Stream Cipher", ISITA, pp.778–783, 2010.

\bibitem{Enocoro-128v2_evaluation}M. Hell and T. Johansson, "Security evaluation of stream cipher Enocoro-128v2", CRYPTREC Technical Report, 2010.

\bibitem{GUESS18}K. Ideguchi and D. Watanabe. "Method of security evaluation of guess and determine attacks", The 2008 Symposium on Cryptography and Information Security, SCIS 2008, 3A1-4, 2008.

\end{thebibliography}
\end{document}